\documentclass[format=acmsmall, review=false]{acmart}
\usepackage{acm-ec-26}
\usepackage{booktabs} 
\usepackage{algorithm}
\usepackage{algorithmic}
\usepackage{mathtools}
\usepackage{amsmath}
\usepackage{amsthm}
\usepackage{tcolorbox}

\newtheorem{theorem}{Theorem}

\newtheorem{definition}[theorem]{Definition}
\usepackage{graphicx}         
\usepackage{float}
\usepackage{subcaption}
\theoremstyle{remark}
\newtheorem{rem}{Remark}
\usepackage{tikz}
\usetikzlibrary{arrows.meta, positioning}
\usepackage{float}

\setcitestyle{authoryear}

\title{Dynamic Transaction Scheduling and Pricing in the Ethereum Mempool}

\author{Fatemeh Fardno}
\email{ffardno2@illinois.edu}
\affiliation{%
  \institution{University of Illinois Urbana-Champaign}
  \city{Urbana}
  \state{Illinois}
  \country{USA}
}

\author{S. Rasoul Etesami}
\email{etesami1@illinois.edu}
\affiliation{%
  \institution{University of Illinois Urbana-Champaign}
  \city{Urbana}
  \state{Illinois}
  \country{USA}
}

\begin{abstract}
The Ethereum blockchain utilizes the EIP-1559 algorithm to manage transaction inclusion and block assembly. However, the EIP-1559 algorithm and much of the existing literature address this problem from a static perspective, emphasizing how prices evolve over time without considering the dynamic evolution of transactions within the mempool. Motivated by this framework, we study a dynamic transaction scheduling problem in which transactions have heterogeneous sizes and per-unit values and arrive in the mempool over time. While prior work typically assumes that transactions are impatient, we instead model them as patient, reflecting the fact that unscheduled transactions remain in the mempool and may be scheduled in future blocks.

To explicitly account for the stochastic dynamics of transactions waiting in the mempool, we model the problem as a discounted Markov Decision Process (MDP) that captures both arrival dynamics and the evolving state of the transaction pool. To this end, we first provide a novel primal–dual interpretation of the static EIP-1559 algorithm, which allows us to view the pricing scheme as the dual variables of an appropriate social welfare maximization program. This perspective enables us to characterize the interaction among pricing, block capacity constraints, and transaction selection. Building on this interpretation, we extend the framework by presenting a dual-based formulation of the transaction scheduling problem in the dynamic setting, interpreting block prices as decision variables linked to the dual occupancy measures of the underlying MDP. Moreover, we formulate the objective as maximizing the long-run discounted reward while accounting for holding costs and capacity overshoot penalties. We employ the Natural Policy Gradient (NPG) algorithm to determine the optimal scheduling policy. Our results show that dynamic pricing stabilizes the transaction pool while maximizing long-run discounted reward. In particular, as the overshoot penalty increases, the average scheduled transaction volume converges to the target block capacity, and the resulting NPG policy updates closely resemble the EIP-1559 price update rule \cite{babaioff2024optimality} in the static setting. We also validate our theoretical findings through extensive numerical experiments. 

Finally, we study two special cases of the MDP formulation: one with homogeneous transactions and one with uniform arrivals. In the homogeneous case, where the protocol directly controls the scheduled transaction volume, we show that the optimal policy has a threshold structure. Motivated by this characterization, we then propose a bang--bang pricing mechanism for uniform arrivals to address the challenge of block overflow and derive a theoretical lower bound on the block capacity required to ensure system stability.  
\end{abstract}

\begin{document}


\maketitle


\setcounter{tocdepth}{1} 

\tableofcontents


\section{Introduction} 
Blockchains are distributed, tamper-resistant digital ledgers that record transactions across a network of computers in a way that ensures transparency, security, and decentralization \cite{nakamoto2008bitcoin}. Blockchains, such as Ethereum \cite{buterin2014ethereum}, operate by repeatedly scheduling transactions into blocks. For simplicity, one may assume that each transaction has a size and a value per size for its user, and that each block has a capacity limit. Blocks are assembled by validators, whom the protocol designer seeks to incentivize to maximize the total value of included transactions subject to this capacity constraint. A key challenge is that both users and validators are selfish and strategic, so the transaction selection mechanism must account for such behavior. Ethereum addresses this challenge through the \emph{EIP-1559} mechanism \cite{eip1559}, which relaxes the strict maximum block size from $B$ to $c \cdot B$ for some $c > 1$, and instead effectively constrains the average block size to the target value $B$. As a result, EIP-1559 produces a sequence of blocks whose sizes may vary over time, but whose average is regulated to remain close to $B$.

The EIP-1559 mechanism operates as follows. For each block $t$, the mechanism sets a price $p_t$ and accepts only transactions whose value per unit size exceeds $p_t$. Transactions that do not meet this threshold remain in the mempool until they are scheduled at a later time. We assume that transactions are \emph{patient}, meaning they can be scheduled at any future time without losing value. If the total size of eligible transactions exceeds the maximum block size $c \cdot B$, then transactions are prioritized according to the tips they are willing to pay. In this paper, we do not consider tips in our analysis and instead assume that, whenever the capacity constraint $c \cdot B$ is violated, an adversary selects which transactions are included in the block until the capacity $c \cdot B$ is reached.

The price is updated according to a simple rule \cite{babaioff2024optimality}: if the total size of scheduled transactions exceeds the target block capacity $B$, the price for the next block is increased; otherwise, it is decreased. More formally, the price evolves according to
\[
p_{t+1} = \max\left\{ p_{\min}, \, p_t \cdot e^{\eta \frac{Q_t - B}{B}} \right\},
\]
where $Q_t$ denotes the total size of block $t$, and $\eta$ is a small constant, set in Ethereum to $\eta = \tfrac{1}{8}$. The protocol designer also specifies a minimum price $p_{\min}$ to ensure that the block price never falls below this level. This pricing mechanism, while simple and effective as shown in \cite{babaioff2024optimality}, does not incorporate information about transactions waiting in the mempool or about newly arriving transactions. Such information can be important in settings where the cost of holding transactions in the mempool is high, in which case it may be desirable to schedule more than the target block size in order to prevent the mempool from becoming congested.

We first formulate the transaction scheduling problem as a linear program with the objective of maximizing social welfare. We then show that the block price arises naturally as a dual variable of this linear program. This observation allows us to extend EIP-1559 to a dynamic setting with patient users using a Markov Decision Process (MDP), in which the action taken by the mechanism at each time $t$ is the price set for block $t$.

More specifically, to explicitly capture the stochastic dynamics of transactions waiting in the mempool, we model the protocol as the decision-making agent in this MDP. The state corresponds to the mempool configuration, that is, the number of transactions of each value and size currently present, and the action is the price chosen by the protocol. The reward is defined to include penalties for mempool congestion as well as for scheduling transactions in excess of the target block capacity. Unlike the original EIP-1559 framework, our objective is not to maximize social welfare, but rather to reduce mempool congestion while still aiming to meet the target block capacity. We nevertheless show that, under an appropriate choice of parameters, the average volume of scheduled transactions converges to the same level as that achieved by the EIP-1559 algorithm. 

We then formulate the objective of maximizing the long-run discounted reward and apply the Natural Policy Gradient (NPG) algorithm to compute an optimal policy. Our results show that, as the penalty for exceeding the target block capacity increases, the average volume of scheduled transactions converges to that of the EIP-1559 algorithm, while the resulting policy updates closely resemble the EIP-1559 price update rule.

We finally study two special cases of the MDP formulation. First, we consider a homogeneous setting in which all transaction sizes and values are identical. In this case, pricing becomes irrelevant because all transactions have the same value, and the problem reduces to one in which the protocol directly controls the scheduled transaction volume. In this case, we show that the optimal policy has a threshold structure. In particular, as long as mempool congestion remains below a certain threshold, the protocol schedules the minimum possible volume. Once congestion reaches the threshold, the protocol schedules more than $B$ units to reduce congestion, continuing until congestion falls below the threshold, at which point the cycle repeats.

Subsequently, we study another special case of the MDP in which arrivals are uniform, meaning that at each time step exactly one transaction of each type arrives in the system. Motivated by the simple threshold structure of the optimal scheduling policy in the homogeneous transaction setting, we propose and analyze a bang--bang pricing mechanism for the uniform arrival setting; that is, a mechanism in which the price is restricted to either its minimum or maximum value. In particular, we derive a theoretical lower bound on the block capacity $B$ required to ensure system stability under the bang--bang pricing mechanism with uniform arrivals.

\subsection{Related Works}

In the Ethereum blockchain, the transaction fee mechanism determines the price that users pay to have their transactions included in the blockchain. In recent years, there has been significant progress in the design and analysis of blockchain transaction fee mechanisms, particularly within Ethereum. A foundational line of work was initiated by Vitalik Buterin \cite{buterin2014ethereum, ButerinBlockchainRP}, who proposed a set of ideas aimed at controlling block size through protocol-controlled pricing. Central among these ideas is EIP-1559, a transaction fee mechanism originally proposed by Buterin \cite{eip1559} and later implemented in Ethereum. The EIP-1559 mechanism has since been analyzed in academic works, including the paper by Tim Roughgarden \cite{roughgarden2020transaction}, which provides the first game-theoretic analysis of EIP-1559, examining its incentive properties and the economic rationale behind the price update rule.

More recently, Babaioff et al. \cite{babaioff2024optimality} studied the optimality properties of the EIP-1559 mechanism. While their analysis provides strong theoretical support for EIP-1559 in static settings, it does not explicitly model the dynamic evolution of transactions in the mempool over time. Several other works study different aspects of the EIP-1559 mechanism. For example, Reijsbergen et al. \cite{reijsbergen2021transaction} show that, while EIP-1559 achieves its goal on average, its short-term dynamics can be chaotic. In a different direction, Hougaard et al. \cite{hougaard2023farsighted} analyze EIP-1559 in settings where miners behave strategically rather than passively following the protocol.

Another related work, \cite{angeris2024multidimensional}, also formulates blockchain resource allocation as an optimization problem and interprets prices through dual variables. However, their analysis focuses on regret with respect to the single best fixed price in hindsight, whereas we provide a competitive ratio bound against the optimal social welfare, which is a stronger and more challenging benchmark to achieve. Our work differs more fundamentally in that, by interpreting prices as dual variables, we extend the framework to a dynamic setting with full mempool information. In this setting, we show that, in the absence of congestion penalties, the resulting update rule coincides with EIP-1559.

This body of work primarily adopts a static perspective, focusing on how prices evolve in response to previous block utilization and strategic user behavior, rather than on the stochastic evolution of transactions in the mempool. In contrast to this literature, our work explicitly models the dynamics of the mempool, treating unscheduled transactions as state variables that directly influence long-term scheduling decisions.

There is also a large body of literature studying dynamic pricing in EIP-1559 \cite{leonardos2023optimality,ferreira2021dynamic}. Crapis et al. \cite{crapis2024optimal} formulate blockchain fee design as a sequential decision-making problem and derive optimal fee update rules, which are empirically compared to EIP-1559 using Ethereum data. In this framework, the protocol that sets transaction fees acts as the learning agent; however, the model does not satisfy the Markov property. Interestingly, they show that the resulting optimal policy resembles the EIP-1559 update rule, with some key differences. In particular, EIP-1559 uses the current block size as a naive estimate for future demand, whereas their optimal policy relies on an explicit predictive estimate.

Leonardos et al. \cite{leonardos2021dynamical} model transaction arrivals to the mempool as a Poisson process with impatient transactions, meaning that they leave the pool if they are not included in the next block. They analyze the price update rule as a discrete-time, discrete-space stochastic process $\{p_t\}_{t \geq 0}$, where the source of randomness is the number of transactions included in block $t$ given the price $p_t$. They study the convergence and stability of the resulting dynamics and derive bounds on the step size of the price update rule that guarantee global convergence to equilibrium.

From a different perspective, \cite{math13061010} studies the EIP-1559 mechanism using tools from queuing theory. Similar to our approach, the mempool is modeled as a collection of queues, with the system state defined by the lengths of these queues. However, rather than maintaining separate queues for each transaction type as in our approach, transactions are classified into high-priority and low-priority classes, each with its own queue. Within this framework, the authors derive key performance metrics, including stability conditions, the stationary distribution, and the average queue length and waiting time for each transaction class.

Similar to our approach, \cite{madrigal2025pricing} analyzes the EIP-1559 mechanism using a Markovian framework. Under suitable assumptions, the authors show that the price process $\{p_t\}_{t \ge 0}$ satisfies the Markov property and can be modeled as a discrete-time Markov chain. This characterization allows them to study the long-run behavior of the price.

\vspace{-0.1cm}
\subsection{Contributions}

The key contributions of this work are as follows:
\begin{itemize}
    \item We present a novel primal--dual interpretation of the EIP-1559 algorithm, which enables us to view its pricing scheme as the dual variables of an appropriate social welfare maximization program. Using this principled primal--dual framework, we analyze the competitive ratio of the EIP-1559 algorithm.
    \item Building upon the primal--dual analysis, we extend the framework to a dynamic setting by modeling the state evolution as a Markov decision process (MDP).
    \item We employ an episodic Natural Policy Gradient (NPG) method to determine the optimal scheduling policy in the MDP. In particular, we show that the exponential price update rule used in EIP-1559 emerges naturally as a special case of the NPG algorithm, thereby providing a formal justification for the mechanism’s design. We also validate our results numerically.

    \item We analytically study two special cases of the MDP formulation: one with homogeneous transactions and one with uniform arrivals. In the homogeneous transaction setting, we show that the optimal policy has a threshold structure. Motivated by this result, we propose a simple bang--bang pricing mechanism for the uniform arrival setting and use it to derive a lower bound on the target block capacity $B$ required to ensure stability under such policies.
\end{itemize}

\section{Static Online Scheduling Problem}

In this section, we present an online transaction scheduling problem in the static setting, i.e., without any state evolution, and extend this model to a dynamic setting in Section~\ref{sec:dynamic}.

Consider an online transaction scheduling problem in which time is divided into periods $t = 1,\dots,T$, each corresponding to a block with fixed capacity $B$. At each time $t$, a set of transactions arrives. Each transaction $j$ is characterized by a size $q_j$ and a per-unit value $v_j$, resulting in a total value of $q_j v_j$ if it is scheduled in a block. Transaction $j$ arrives at time $r_j$ and is \emph{patient}, meaning that it can be scheduled in any block $t \ge r_j$ without loss of value. The objective is to schedule transactions across blocks so as to maximize social welfare, defined as the total value of scheduled transactions over all blocks.

\begin{rem}
In the EIP-1559 mechanism, the strict maximum block size is relaxed to $2B$, while the price update rule effectively constrains the average block size to the target value $B$. A difficulty arises when there are too many transactions with high per-unit values. Under EIP-1559, the total size of any block is constrained to be at most twice the target block size. In this case, we assume that an adversary selects a maximal (by inclusion) set of transactions that fit within the allowed maximum block size, i.e., $2B$. 
\end{rem}

In the offline setting, when all transaction arrival times and values/sizes are given \emph{a priori}, we can write a linear program relaxation that maximizes social welfare subject to block capacity constraints and feasibility constraints on transaction assignments. To this end, let $x_{jt} \in [0,1]$ denote the fraction of transaction $j$ scheduled in block $t$. The resulting (primal) linear program relaxation is as follows
\begin{align}
\label{eq:primal}
    \max \quad & \mathrm{SW} := \sum_{j \in \mathcal{C}} \sum_{t=1}^T q_j v_j x_{jt} \\
    \text{s.t.} \quad 
    & \sum_{t=r_j}^T x_{jt} \le 1 \quad \forall j \notag\\
    & \sum_{j \in \mathcal{C}} x_{jt} q_j \le 2B \quad \forall t \notag\\
    & x_{jt} \ge 0 \quad \forall j \in \mathcal{C},\ t \ge r_j, \notag
\end{align}
where $\mathcal{C}$ is the set of all transactions. Subsequently, the dual program can be written as
\begin{align}
\label{eq:dual}
    \min \quad & \sum_{t=1}^T \beta_t + \sum_{j=1}^n \alpha_j \\
    \text{s.t.} \quad & q_j v_j \le \frac{\beta_t q_j}{2B} + \alpha_j 
    \quad \forall j,\ \forall t \ge r_j \notag\\
    & \beta_t \ge 0,\quad \alpha_j \ge 0 \quad \forall j,t. \notag
\end{align}

We now provide an interpretation of the dual variables. Let $\beta_t$ denote the price of using the full capacity $2B$ of block $t$. Scheduling $q_j$ units of transaction $j$ on block $t$ therefore incurs a cost of $\frac{\beta_t}{2B}\, q_j$. Transaction $j$ has total value $v_j q_j$. Thus, if it is scheduled on block $t$, its utility is
\[
q_j v_j - \frac{\beta_t}{2B} q_j .
\]
Define
\[
\alpha_j \coloneqq \max_{t \ge r_j} \left\{ q_j v_j - \frac{\beta_t}{2B} q_j \right\},
\]
which represents the maximum utility that transaction $j$ can achieve. Since transactions never accept negative utility, transaction $j$ is scheduled only on blocks whose price does not exceed its value, i.e.,
\[
\frac{\beta_t}{2B} \le v_j .
\]
Therefore, the problem of choosing the allocation variables $x_{jt}$ reduces to the problem of setting a price for each block $t$. As a result, transactions with $v_j \ge \frac{\beta_t}{2B}$ are scheduled in block $t$.

Another way to view this problem is through the objective of maximizing social welfare, which naturally leads us to schedule as many transactions as possible. To achieve this, transactions with high values should be scheduled on blocks with high prices; otherwise, low-value transactions may be left without any blocks on which they can be scheduled. Consequently, we prefer the surplus $\alpha_j$ of each transaction to be as small as possible (ideally zero). At the same time, prices should be kept as low as possible so as to encourage transactions to submit their jobs. This reasoning provides intuition for the dual objective function.

\begin{rem}
If transactions act strategically, they could obtain a reward of at most $\alpha_j$. In this formulation, however, we assume that transactions do not behave strategically. Instead, they purchase the first block whose price is below their value and that has sufficient remaining capacity.
\end{rem}
\begin{rem}
    The capacity constraint $
\sum_{j \in \mathcal{C}} q_j x_{jt} \le 2B$ in the primal formulation~\eqref{eq:primal} captures the hard cap on block size imposed by the EIP-1559 mechanism. Note that the target block size~$B$ is not enforced as a feasibility constraint; instead, it emerges endogenously through the price update rule, which adjusts prices in response to block congestion. Our primal--dual formulation therefore enforces feasibility via the hard cap constraint, while regulating the realized block size through price updates. We refer the reader to the supplementary material in Appendix~\ref{app} for a more detailed primal--dual program that directly incorporates the target block size constraint into the objective function of the primal.
\end{rem}

\subsection{Competitive Ratio Analysis of EIP-1559 Using the Primal--Dual Method} 

Building on the linear program relaxation of the static online scheduling problem, in this section, we analyze the competitive ratio of the EIP-1559 algorithm using a principled primal--dual approach.\footnote{The competitive ratio of an online algorithm is the worst-case ratio between the objective value achieved by the online algorithm and that of an optimal offline algorithm with full knowledge of the input.} To that end, we first formally describe the EIP-1559 mechanism in Algorithm~\ref{alg:EIP1559}.
\begin{algorithm}[htb]
\caption{EIP-1559 Pricing Mechanism}
\label{alg:EIP1559}
\begin{algorithmic}[1]
\REQUIRE Target block capacity $B$, hard cap $2B$, update parameter $\eta>0$, minimum price $p_{\min}$, horizon $T$
\STATE Initialize base fee $p_1 \ge p_{\min}$
\FOR{$t=1,2,\dots,T$}
    \STATE Announce base fee $p_t$
    \STATE Include transactions with value $v_j \ge p_t$ until total size\footnotemark $Q_t \le 2B$. 
    \STATE Update price:
    \begin{align}\label{eq:EIP-price-rule}
    p_{t+1} = \max\!\left\{p_{\min},\, p_t \exp\!\left(\eta \frac{Q_t - B}{B}\right)\right\}
    \end{align}
\ENDFOR
\end{algorithmic}
\end{algorithm}
\footnotetext{If the total size of eligible transactions exceeds $2B,$ the miner selects a maximal (by inclusion) subset of eligible transactions whose total size does not exceed the maximum block capacity, prioritizing transactions according to the tips they offer.}


\begin{theorem}
 The EIP-1559 algorithm is $\gamma$-competitive for the static online  scheduling problem where
\begin{equation}\nonumber
\gamma
 =
\max\left\{\frac{4B}{q_{\text{min}} e^{\eta(q_{\text{min}}-B)/B}},\ \frac{2(v_{\max}-p_{\min})}{v_{\max}}\right\}.
\end{equation}
\end{theorem}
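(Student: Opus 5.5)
We will follow the standard online primal--dual template: run Algorithm~\ref{alg:EIP1559} and, alongside it, build a feasible solution $(\alpha,\beta)$ of the dual program~\eqref{eq:dual} from the realized prices and allocations. The aim is to show
\[
\sum_t \beta_t+\sum_j\alpha_j\le\gamma\cdot \mathrm{SW}_{\mathrm{ALG}} .
\]
Weak duality then gives $\mathrm{OPT}\le \mathrm{OPT}_{\mathrm{LP}}\le \gamma\,\mathrm{SW}_{\mathrm{ALG}}$.

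The dual solution we propose is the one suggested by the interpretation after~\eqref{eq:dual}. Set $\beta_t=2B\,p_t$, so that the per-unit dual price equals the announced base fee. For a transaction $j$ that is eventually scheduled, set $\alpha_j=q_j(v_j-\min_{t\ge r_j}p_t)^+$. For a transaction that is never scheduled, set $\alpha_j=q_j(v_j-p_{\min})^+$. We then check feasibility of the constraint $q_jv_j\le \beta_tq_j/(2B)+\alpha_j$ in three cases.
\begin{itemize}
\item If $p_t\ge v_j$, the constraint holds trivially.
\item If $p_t<v_j$, transaction $j$ was eligible at time $t$. So it was either scheduled at some time $\le t$, or block $t$ was filled to at least $2B-q_{\max}$ by the adversary's maximal selection. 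In both subcases the chosen $\alpha_j$, which is the surplus relative to the best price available, covers $q_j(v_j-p_t)$.
\end{itemize}

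Next we charge the dual objective to the algorithm's welfare, splitting it into the $\beta$ part and the $\alpha$ part. These two parts should produce the two terms in the $\max$.

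\emph{The $\beta$ part.} Every transaction included in block $t$ has $v_j\ge p_t$, so the block's welfare is at least $p_tQ_t$. Then $\beta_t/(p_tQ_t)=2B/Q_t$. This ratio is at most $2$ whenever $Q_t\ge B$. Blocks with small $Q_t$ must be handled by amortization through the price dynamics. If block $t$ is nonempty, then $Q_t\ge q_{\min}$, and by the update rule~\eqref{eq:EIP-price-rule} the next price satisfies $p_{t+1}\ge p_te^{\eta(q_{\min}-B)/B}$. Equivalently, $p_t\le p_{t+1}e^{-\eta(q_{\min}-B)/B}$. We pair each light block with the following block, or charge its price to the first later block that contains a transaction. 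This bounds $\beta_t$ by $\frac{2B}{q_{\min}}e^{-\eta(q_{\min}-B)/B}$ times a welfare term. Counting each welfare term at most twice gives the constant $\frac{4B}{q_{\min}e^{\eta(q_{\min}-B)/B}}$.

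\emph{The $\alpha$ part.} For scheduled transactions, $\alpha_j\le q_jv_j$. For unscheduled transactions, each surplus $q_j(v_j-p_{\min})$ must be charged to a full block, meaning one with $Q_t\ge 2B-q_{\max}$ that the transaction could have entered. The welfare of that block is at least $p_tQ_t$. Using $v_j\le v_{\max}$, the surplus relative to the block value is at most a factor $(v_{\max}-p_{\min})/v_{\max}$. Accounting for the duplicated charging gives the factor $2(v_{\max}-p_{\min})/v_{\max}$. We then combine the two parts, taking the maximum rather than the sum. The justification is that each block's welfare is charged either through $\beta_t$ when it is light or through the $\alpha$'s when it is full.

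\textbf{Main obstacle.} The hardest step is the amortization for empty or light blocks, where the price decays while $\beta_t$ still contributes $2Bp_t$ with no welfare to pay for it. The plan is to set $\beta_t=0$ on empty blocks. Feasibility is preserved there because any eligible transaction at an empty block would have been scheduled, so it would already carry its own $\alpha_j$. The exponential factor is then used only to relate consecutive nonempty blocks. If this choice breaks feasibility, the fallback is to define $\beta_t=2B\min(p_t,\cdot)$ using the minimum price over a window.
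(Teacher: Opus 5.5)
Your dual is essentially the paper's ($\beta_t=2Bp_t$, with $\alpha_j$ the best surplus over $t\ge r_j$). Feasibility is immediate, because $\alpha_j\ge q_j(v_j-p_t)$ for every $t\ge r_j$ by construction. The gap is in how you assemble $\gamma$. You bound $\sum_t\beta_t$ and $\sum_j\alpha_j$ separately, each with a factor $2$ for ``double counting,'' and then take the \emph{maximum} of the two constants. Separate bounds $\sum_t\beta_t\le\gamma_1\mathrm{SW}^{\mathrm{OA}}$ and $\sum_j\alpha_j\le\gamma_2\mathrm{SW}^{\mathrm{OA}}$ only give $\gamma_1+\gamma_2$. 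Your disjointness justification also conflicts with your own charging: the $\beta$ part pays $\beta_t$ out of every block, full ones included, and the $\alpha$ part charges unscheduled surplus to those same full blocks. The paper gets the maximum differently. It proves each bound against the \emph{entire} welfare with no factor $2$, namely $\mathrm{SW}^{\mathrm{OA}}\ge\frac{q_{\min}e^{\eta(q_{\min}-B)/B}}{2B}\sum_t\beta_t$ and $\mathrm{SW}^{\mathrm{OA}}\ge\frac{v_{\max}}{v_{\max}-p_{\min}}\sum_j\alpha_j$, and then averages the two inequalities. Both factors $2$ in $\gamma$ come from that averaging. The second inequality also needs more than your $\alpha_j\le q_jv_j$. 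For scheduled $j$ you need $q_jv_j\ge\alpha_j+p_{\min}q_j$ together with $\alpha_j\le q_j(v_{\max}-p_{\min})$, which gives $\alpha_j\le\frac{v_{\max}-p_{\min}}{v_{\max}}q_jv_j$.

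The two cases you flag as hard are not closed, and they cannot be closed as planned.

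\emph{Empty blocks.} Setting $\beta_t=0$ on an empty block turns the constraint at $t$ into $\alpha_j\ge q_jv_j$ for every $j$ with $r_j\le t$. This fails in particular for the pending transactions with $v_j<p_t$ that made the block empty. Indeed, $T=1$ with $p_1$ above all values gives $\mathrm{SW}^{\mathrm{OA}}=0$, so the paper simply assumes no block is empty. Under that assumption, $\beta_t=2Bp_t\le\frac{2B}{q_{\min}}W_t$ holds directly, where $W_t\ge p_tq_{\min}$ is the welfare of block $t$; this needs neither the exponential factor nor a block $T+1$.

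\emph{Unscheduled transactions.} Your $\alpha_j=q_j(v_j-p_{\min})^+$ is positive even for transactions that were never eligible, so there is no full block to charge. Use $q_j(v_j-\min_{t\ge r_j}p_t)^+$ instead, which vanishes for them. For eligible-but-excluded transactions, no charging can work. Suppose that at each $t$ a fresh batch of total size $2B$ with value exactly $p_t$ arrives and the adversary includes it, while a large stack of value-$v_{\max}$ transactions waits. Then every block is full and $p_t=p_{\min}e^{\eta(t-1)}$ reaches $v_{\max}$ after $T\approx\eta^{-1}\ln(v_{\max}/p_{\min})$ blocks. This gives $\mathrm{SW}^{\mathrm{OA}}<2Bv_{\max}/(1-e^{-\eta})$ while $\mathrm{SW}^{\mathrm{OPT}}\ge 2BTv_{\max}$. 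The ratio exceeds $T(1-e^{-\eta})$, which is of order $\ln(v_{\max}/p_{\min})$, whereas the stated $\gamma$ is bounded independently of $v_{\max}/p_{\min}$.

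The paper's proof sidesteps this only by setting $\alpha_j=0$ for unscheduled transactions. That choice is dual-feasible only if no eligible transaction is ever left out of a full block. Under that assumption plus nonempty blocks, unscheduled transactions have $\alpha_j=0$ and the averaging argument goes through. Without these assumptions, neither your charging nor the paper's yields the stated bound.
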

\begin{proof}
Let $\mathrm{SW}^{\mathrm{OA}}$ denote the total social welfare achieved by the EIP-1559 online algorithm (OA) over the first $T$ blocks, and let $\mathrm{SW}^{\mathrm{OPT}}$ denote the optimal offline welfare. Moreover, let $p_t$ be the EIP-1559 price per unit size in block $t$, and define
\[
\beta_t \coloneqq 2B \cdot p_t \ \ \ \ \ t=1, \ldots, T.
\]
For each transaction $j$ with arrival time $r_j$, define
\[
\alpha_j \coloneqq \max\Bigl\{0, \max_{t \ge r_j} q_j(v_j - p_t)\Bigr\}.
\]
Then, for every $t \ge r_j$, we have
\[
q_j v_j \le p_t q_j + \alpha_j = \frac{\beta_t}{2B} q_j + \alpha_j,
\]
so that $(\alpha, \beta)$ is feasible for the dual program in Eq.~\eqref{eq:dual}. Using weak duality for the primal maximization program in Eq.~\eqref{eq:primal}, we have
\begin{equation}\label{eq:weak-duality}
\mathrm{SW}^{\mathrm{OPT}}
\le \sum_{t=1}^{T} \beta_t + \sum_{j} \alpha_j.
\end{equation}

Next, we establish two (independent) lower bounds.
\begin{itemize}
    \item We first lower-bound $\mathrm{SW}^{\mathrm{OA}}$. The worst case occurs when each transaction $j$ is scheduled in a block $t$ such that $q_j v_j - \frac{\beta_t q_j}{B} = \alpha_j$. Let $t_j$ denote this block for transaction $j$. Note that for unscheduled transactions we set $\alpha_j = 0$. It then follows that
    \begin{align}
    \text{SW}^{\text{OA}}
    &= \sum_{j:\ \mathrm{scheduled}} v_j q_j \notag\\
    &= \sum_{j:\ \mathrm{scheduled}} \left(v_j q_j - \frac{\beta_{t_j} q_j}{2B} + \frac{\beta_{t_j} q_j}{2B}\right) \notag\\
    &= \sum_{j:\ \mathrm{scheduled}} \left(\alpha_j + \frac{\beta_{t_j} q_j}{2B}\right) \notag\\
    &\ge \sum_{j=1}^n \alpha_j + p_{\mathrm{min}} \sum_{j:\ \mathrm{scheduled}} q_j \notag\\
    &\ge \sum_{j=1}^n \alpha_j
       + p_{\text{min}} \sum_{j=1}^n
       \frac{\alpha_j}{v_j- \frac{\beta_{t_j}}{2B}} \notag\\
    &\ge \left(1+\frac{p_{\text{min}}}{v_{\text{max}}- p_{\text{min}}}\right)
       \sum_{j=1}^n \alpha_j \notag\\
    &= \frac{v_{\text{max}}}{v_{\text{max}}- p_{\text{min}}}
       \sum_{j=1}^n \alpha_j
    \label{eq:alpha-lb}
    \end{align}

    where $p_{\mathrm{min}}$ is the minimum price set by the EIP-1559 protocol and $v_\mathrm{max}$ is the maximum value of the transactions. 

    \item We now derive a lower bound for $\mathrm{SW}^{\mathrm{OA}}$ based on $\sum_{t=1}^{T} \beta_t$. For this bound, we assume that no block is empty. Let $\mathrm{Alg}^t$ denote the social welfare of the online algorithm up to time $t$, so that $\mathrm{Alg}^{t+1} - \mathrm{Alg}^{t}$ represents the value scheduled in block $t+1$ by the online algorithm. Also, let $Q_t = \sum_{j \in \mathcal{C}} q_j x_{jt}$ be the total amount of transactions scheduled in block $t$. We then have
    \begin{align*}
        \text{Alg}^{t+1} - \text{Alg}^{t} &= \sum_{j: \text{scheduled on block }t+1} q_j v_j\cr 
        &\geq p_{t+1} \sum_{j: \text{scheduled on block }t+1} q_j\cr 
        &= p_{t+1} Q_{t+1}\cr 
        &\geq p_{t+1} q_{\text{min}},
    \end{align*}
    where $p_{t+1}$ is the price of block $t+1$ and is given by $p_{t+1} = \text{max} \{p_{\text{min}}, p_t e^{\eta(Q_t-B)/B}\}$. Using the fact that $p_t = \frac{\beta_t}{2B}$, we have
    \begin{align*}
        \text{Alg}^{t+1} - \text{Alg}^{t} &= \sum_{j: \text{scheduled on block }t+1} q_j v_j\cr 
        & \geq p_{t+1} q_{\text{min}}\cr 
        &\geq q_{\text{min}} p_t e^{\eta(Q_t-B)/B} \cr
        &\geq q_{\text{min}} \frac{\beta_t}{B} e^{\eta(q_{\text{min}}-B)/B}. 
    \end{align*}
Given that the online algorithm can run up to time $T$, we have
    \begin{align}
    \label{eq:beta-lb}
        \text{SW}^{\text{OA}} \geq \frac{q_{\text{min}} e^{\eta(q_{\text{min}}-B)/B}}{2B} \sum_{t=1}^{T} \beta_t.
    \end{align}
    \end{itemize}

To complete the proof, by combining  the lower bounds in Eq.~\eqref{eq:alpha-lb} and Eq.~\eqref{eq:beta-lb}, we get
\[
\mathrm{SW}^{\mathrm{OA}}
\ \ge\
\frac{q_{\text{min}} e^{\eta(q_{\text{min}}-B)/B}}{4B} \sum_{t=1}^{T} \beta_t
\ +\
\frac{v_{\max}}{2(v_{\max}-p_{\min})}\sum_j\alpha_j.
\]
Hence
\[
\sum_{t=1}^{T}\beta_t+\sum_j\alpha_j
\ \le\
\max\left\{\frac{4B}{q_{\text{min}} e^{\eta(q_{\text{min}}-B)/B}},\ \frac{2(v_{\max}-p_{\min})}{v_{\max}}\right\}\ \mathrm{SW}^{\mathrm{OA}}.
\]
This in view of ~\eqref{eq:weak-duality} shows that the EIP-1559 algorithm is $\gamma$-competitive with
\begin{equation}\label{eq:gamma}
\gamma
 =
\max\left\{\frac{4B}{q_{\text{min}} e^{\eta(q_{\text{min}}-B)/B}},\ \frac{2(v_{\max}-p_{\min})}{v_{\max}}\right\}.
\end{equation}
\end{proof}

\begin{rem}
While the competitive ratio of EIP-1559 was analyzed in \cite{babaioff2024optimality}, our analysis has several key advantages: (i) it provides a simple and principled analysis based on the primal--dual method; (ii) unlike the competitive-ratio analysis in \cite{babaioff2024optimality}, we do not consider additional $\Gamma$ time steps for the online algorithm; and (iii) it offers an interpretable way of choosing prices in terms of the dual variables of the underlying linear program, which in turn allows us to extend this method to a dynamic setting. 
\end{rem}

\section{An MDP Formulation for the Dynamic Scheduling Problem}\label{sec:dynamic}
Building upon the primal--dual analysis of the static online scheduling problem, in this section we provide an extension to the dynamic version of the problem by modeling state evolution using a Markov decision process (MDP) formulation.

We consider a pool of transactions with \emph{heterogeneous} sizes and values.\footnote{We refer to Section~\ref{sec:threshold} for an equivalent simplified version of this model in the case of homogeneous transactions.} Transactions take per-unit values from the finite set $\mathcal V = \{v_1,\dots,v_n\}$ with $n \geq 2$, and sizes from the finite set $\mathcal Q = \{q_1,\dots,q_m\}$. At each time $t = 1,2,\ldots$, the system state is represented by a matrix $\mathbf{S}_t \in \mathbb{Z}_{\ge 0}^{m \times n}$, where the $(i,j)$th entry, denoted by $\mathbf{S}_t[i,j]$, records the number of transactions of size $q_i$ and value $v_j$ waiting in the pool at time $t$. The action at time $t$ is an index $a_t \in \{1,\dots,n\}$, corresponding to a value threshold $v_{a_t} \in \mathcal V$. All transactions with value at least $v_{a_t}$ are \emph{eligible} to be scheduled on block $t$. 

Let $\mathcal{F}$ be a scheduling protocol that is determined at the beginning and fixed throughout the process, such that for any state $\mathbf{S}$ and action $a$, $\mathcal{F}(\mathbf{S}, a)$ determines how the eligible transactions at that state (i.e., those whose value exceeds $v_a$) are packed into a block of size at most $2B$. Accordingly, $\mathcal{F}(\mathbf{S}, a)$ can be represented as an $(m,n)$-dimensional matrix, where the $(i,j)$th entry, denoted by $\mathcal{F}(\mathbf{S}, a)[i,j]$, specifies the number of transactions of size $q_i$ and value $v_j$ scheduled into the block under state $\mathbf{S}$ and action $a$. Given the scheduling protocol $\mathcal{F}$, we let $Q_{\mathrm{sched}}^{\mathcal{F}}(\mathbf{S}, a)$ denote the total size of transactions scheduled under state $\mathbf{S}$ and action $a$ when following protocol $\mathcal{F}$, i.e., 
\[
Q_{\mathrm{sched}}^{\mathcal{F}}(\mathbf{S},a) = \sum_{i=1}^m q_i \sum_{j=1}^n \mathcal{F}(\mathbf{S},a)[i,j].
\]

\begin{rem}\label{rem:schedul-F}
While $\mathcal{F}$ may take a general form, for concreteness we adopt in this work the following natural choice, which follows a bang-per-buck packing rule up to some threshold $\tau$.\footnote{In fact, we will show in Section~\ref{sec:threshold} that this class of scheduling protocols is optimal in the case of homogeneous transactions.} Specifically, given a state--action pair $(\mathbf{S}, a)$, the protocol first sorts all eligible transactions in state $\mathbf{S}$ in ascending order of the product of their value and size, i.e., $v \cdot q$, and then packs them into the block until a maximum capacity threshold $\tau\leq 2B$ is reached. To ensure that the state space remains finite, whenever $\mathbf{S}[i,j] > L$ for some $(i,j)$, the protocol first schedules transactions from $\mathbf{S}[i,j]$ until it is reduced to $L$. Any remaining block capacity is then allocated according to the original sorting rule.
\end{rem}
 
Let $\mathbf{A}_t$ be a stationary stochastic arrival matrix, where $\mathbf{A}_t[i,j]$ is a random variable denoting the number of new transactions of size $q_i$ and value $v_j$ that arrive in the system at time $t$. The state evolution of the system is then governed by the following dynamics:
\begin{align}
\label{eq:state-dynamics}
    \mathbf{S}_{t+1} = \mathbf{S}_{t} + \mathbf{A}_t - \mathcal{F}(\mathbf{S}_t,a_t),
\end{align}
and the state transition probability function is defined accordingly as:
\begin{align}\label{eq:prob-tran-matrix}
P(\mathbf{S}_{t+1} | \mathbf{S}_{t}, a_t, \mathcal{F}) = \mathbb{P}(\mathbf{A}_t =  \mathbf{S}_{t+1} - \mathbf{S}_{t} + \mathcal{F}(\mathbf{S}_t,a_t)). 
\end{align}
Since the scheduling protocol $\mathcal{F}$ is assumed to be fixed, we can suppress it in the notation and write the transition probability as $P(\mathbf{S}_{t+1} \mid \mathbf{S}_t, a_t)$. After the transition occurs, the reward received is
\begin{equation}
\label{reward}
\begin{split}
r(\mathbf{S}_t,a_t) &= -c_{\mathrm{hold}} Q^t_{\mathrm{pool}} - c_{\mathrm{over}}  \left(Q_{\mathrm{sched}}^{\mathcal{F}}(\mathbf{S}_t,a_t)- B\right)^{+},
\end{split}
\end{equation}
where $(\cdot)^+=\max\{0,\cdot\}$, and $Q^t_{\mathrm{pool}}$ is the total size of transactions in the pool at time $t$, defined as
\[
Q^t_{\mathrm{pool}} = \sum_{i=1}^m q_i \sum_{j=1}^n \mathbf{S}_t[i,j].
\]
The parameter $c_{\mathrm{hold}}$ denotes the holding cost incurred by transactions that remain in the pool, whereas $c_{\mathrm{over}}$ represents the penalty for scheduling transactions in excess of the target block size~$B$. Finally, our goal is to maximize the total expected discounted revenue, defined as\footnote{We note that, unlike the original EIP-1559 formulation, we are not maximizing social welfare, but rather the total revenue.
}  
\[
V^\pi(\mathbf{S}_0) = \mathbb{E}_\pi [\sum_{t=1}^{\infty} \gamma^t r(\mathbf{S}_t, a_t)],
\]
where $\pi$ is the \emph{stationary policy} followed by the mechanism, mapping each state to a probability distribution over actions, and $\gamma \in (0, 1)$ is the discount factor. We further assume that the initial state $\mathbf{S}_0$ is the all-zero matrix, indicating that the pool is initially empty.

\begin{rem}
For stationary transaction arrivals, finding the optimal policy is equivalent to solving a finite-state, finite-action MDP, for which the existence of a deterministic stationary optimal policy is well established \cite{abbasi2013online}. Therefore, in the remainder of the paper, without loss of generality, we restrict our analysis to the class of stationary deterministic policies, i.e., mappings $\pi: \mathcal{S} \to \mathcal{A}$ that assign a single action to each state independently of the time step $t$.  
\end{rem}

\begin{rem}
    The structure of the reward function \eqref{reward} induces different scheduling priorities depending on the relative magnitudes of the holding cost $c_{\mathrm{hold}}$ and the overshoot penalty $c_{\mathrm{over}}$. When $c_{\mathrm{hold}} \gg c_{\mathrm{over}}$, maximizing the reward favors scheduling all incoming transactions as quickly as possible, keeping the mempool nearly empty even at the expense of exceeding the target block size. In contrast, when $c_{\mathrm{hold}} \ll c_{\mathrm{over}}$, the optimal policy emphasizes adhering to the target block capacity $B$, potentially allowing the mempool to become crowded in order to avoid overshoot penalties.
\end{rem}

\begin{rem}
\label{B-Q-range}
   Let $Q^t_{\mathrm{arrival}}$ denote the total size of transactions arriving at time $t$. If $Q^t_{\mathrm{arrival}} > 2B$ for all $t$, it becomes impossible to schedule all arriving transactions, and the system will eventually become unstable. Accordingly, in the stochastic arrival setting, we assume that $Q^t_{\mathrm{arrival}}$ is a random variable taking values in the range $[0, 2B]$. In the deterministic setting, we assume $B < Q_{\mathrm{arrival}} \leq 2B$, since if $Q_{\mathrm{arrival}} \leq B$, all transactions can be scheduled within a single block, making the problem trivial.
\end{rem}

In our formulation, we assume that the protocol has complete information about the number of transactions of each type in the mempool. However, we note that our framework can be extended to settings in which the protocol has only partial observations of the mempool, thanks to the extension of MDPs to partially observable Markov decision processes (POMDPs). For instance, if the protocol can observe only the history of included transactions, rather than the entire mempool, it can use the history of scheduled transactions to form and update a belief distribution over the mempool (state), which becomes more accurate over time. In this case, the system can be modeled as a POMDP with belief states rather than an MDP with fully observable states, and existing results \cite{anjarlekar2026scalable} can be used to reduce it to an MDP with a small error in the optimal policy evaluation. In our setting, however, as an initial model and to keep the derivations simple, we assume full information and that the distribution of transactions in the mempool is known.


\section{An Online Revenue Maximizing Algorithm for the Dynamic Scheduling Problem}

In this section, we provide an online algorithm for obtaining the optimal policy that maximizes the long-run discounted average reward under the given state dynamics while ensuring that all arriving transactions are eventually scheduled.

Recall that the long-run discounted average reward of a policy \(\pi\), starting from the initial state \(\mathbf{S}_0\), is defined as
\begin{align*}
    V^\pi(\mathbf{S}_0)
    &=
    \mathbb{E}_\pi
    \left[
    \sum_{t=1}^{\infty} \gamma ^t r(\mathbf{S}_t, a_t) \mid\mathbf{S}_0
    \right].
\end{align*}

To maximize the revenue in the dynamic scheduling problem in an online fashion, we employ the Natural Policy Gradient (NPG) algorithm \cite{even2009online}, adapted to our setting. The NPG algorithm defines a Fisher information matrix and performs gradient updates in the geometry induced by this matrix as follows:
\begin{align}
F_{\rho}(\theta)
&=
\mathbb{E}_{s \sim \nu_\rho^{\pi_\theta}}
\mathbb{E}_{a \sim \pi_\theta(\cdot \mid s)}
\left[
\nabla_\theta \log \pi_\theta(a \mid s)
\bigl(\nabla_\theta \log \pi_\theta(a \mid s)\bigr)^{\top}
\right], \label{eq:NPG-update}\\
\theta^{(t+1)}
&=
\theta^{(t)}
+
\eta\, F_{\rho}(\theta^{(t)})^{\dagger}
\nabla_\theta V^{(t)}(\rho) \notag,
\end{align}
where $\{\pi_\theta\mid\theta \in \Theta\}$ is the class of stationary policies parametrized by $\theta$, $\eta$ is the learning rate, $\rho$ is the initial state distribution, and $\nu^{\pi_\theta}_\rho$ is the state visitation distribution under policy $\pi_\theta$ and $\rho$. We use softmax parametrization, where for unconstrained $\theta \in \mathbb{R}^{\mid\mathcal{S}\mid \mid\mathcal{A}\mid}$, the policy is parametrized as
\begin{align}
\label{eq:softmax-param}
    \pi_{\theta}(a\mid \mathbf{S}) = \frac{\mathrm{exp}\left(\theta_{\mathbf{S},a}\right)}{\sum_{a' \in \mathcal{A}} \mathrm{exp}\left(\theta_{\mathbf{S},a'}\right)}.
\end{align}
Next, we state the following lemma from \cite[Lemma 15]{even2009online}, which provides a performance guarantee for the convergence of the NPG algorithm to an approximately optimal stationary policy.
\begin{lemma}
For the softmax parameterization~\eqref{eq:softmax-param}, the NPG updates in~\eqref{eq:NPG-update} take the form
\begin{align*}
\theta^{(t+1)} &= \theta^{(t)} + \frac{\eta}{1-\gamma} A^{(t)}, \\
\pi^{(t+1)}(a \mid s) &= \pi^{(t)}(a \mid s)
\frac{\exp\left(\eta A^{(t)}(s,a)/(1-\gamma)\right)}{Z_t(s)},
\end{align*}
where $A^{(t)}(s,a):= Q^{\pi^{(t)}}(s,a) - V^{\pi^{(t)}}(s)$ is the advantage function for policy $\pi^{(t)}$, and 
\[
Z_t(s) = \sum_{a \in \mathcal{A}} \pi^{(t)}(a \mid s)
\exp\left(\eta A^{(t)}(s,a)/(1-\gamma)\right).
\]
In particular, by setting 
\begin{align}
\label{eq:lr-condition}
    \eta \geq (1-\gamma)^2 \log|\mathcal{A}|,
\end{align}
the NPG finds an $\epsilon$-optimal policy in at most $T \leq \frac{2}{(1-\gamma)^2\epsilon}$ iterations.
\end{lemma}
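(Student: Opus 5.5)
The plan has two parts: first derive the closed-form update, then invoke the standard mirror-descent analysis of softmax NPG for the iteration bound.

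\emph{Step 1: the Fisher matrix and the policy gradient.} For the tabular softmax parametrization~\eqref{eq:softmax-param}, the score is
\[
\nabla_{\theta_{s,\cdot}}\log\pi_\theta(a\mid s)=e_a-\pi_\theta(\cdot\mid s),
\]
and it is zero in the coordinates of every other state. The Fisher matrix in~\eqref{eq:NPG-update} is therefore block diagonal. The block for state $s$ is
\[
\nu_\rho^{\pi_\theta}(s)\bigl(\mathrm{diag}(\pi_\theta(\cdot\mid s))-\pi_\theta(\cdot\mid s)\pi_\theta(\cdot\mid s)^\top\bigr).
\]
By the policy gradient theorem,
\[
\partial V^{\pi_\theta}(\rho)/\partial\theta_{s,a}=\frac{1}{1-\gamma}\,\nu_\rho^{\pi_\theta}(s)\,\pi_\theta(a\mid s)\,A^{\pi_\theta}(s,a).
\]

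\emph{Step 2: solving the least-squares system.} We need some $w$ with $F_\rho(\theta)w=\nabla_\theta V(\rho)$. Set $w_{s,\cdot}=A^{\pi_\theta}(s,\cdot)/(1-\gamma)$ and use $\sum_a\pi(a\mid s)A(s,a)=0$. The state-$s$ block then gives
\[
\nu(s)\bigl(\pi(a\mid s)A(s,a)-\pi(a\mid s)\textstyle\sum_{a'}\pi(a'\mid s)A(s,a')\bigr)/(1-\gamma)=\nu(s)\pi(a\mid s)A(s,a)/(1-\gamma),
\]
which matches the gradient. The Moore--Penrose solution $F^\dagger\nabla V$ differs from this $w$ only by a component in the null space of $F$. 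That null space consists of state-wise constant shifts of $\theta_{s,\cdot}$, together with the coordinates of states with $\nu(s)=0$, on which the update can be taken to be zero. Such shifts leave the softmax policy unchanged. So, up to this reparametrization invariance, $\theta^{(t+1)}=\theta^{(t)}+\frac{\eta}{1-\gamma}A^{(t)}$. Exponentiating and normalizing gives the multiplicative form with normalizer $Z_t(s)$.

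\emph{Step 3: convergence rate.} I would follow the mirror-descent argument for softmax NPG that is already given in the cited source, so this step is largely citation. First I would show monotone improvement. By the performance difference lemma together with $\log Z_t(s)\ge\frac{\eta}{1-\gamma}\sum_a\pi^{(t)}(a\mid s)A^{(t)}(s,a)=0$ (Jensen), we get $V^{(t+1)}(\mu)-V^{(t)}(\mu)\ge\frac{1-\gamma}{\eta}\mathbb{E}_{s\sim\mu}\log Z_t(s)\ge0$ for any starting distribution $\mu$. Next, apply the performance difference lemma under the optimal policy's visitation $d^\star$. Writing $\pi^{(t)}A^{(t)}$ in terms of $\log(\pi^{(t+1)}/\pi^{(t)})+\log Z_t$, the per-step gap becomes a telescoping KL term plus terms controlled by the improvement. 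Summing over $t=0,\dots,T-1$ and using monotonicity, I expect
\[
V^\star(\rho)-V^{(T)}(\rho)\le\frac{\log|\mathcal A|}{\eta T}+\frac{1}{(1-\gamma)^2T},
\]
since the initial KL from the uniform policy is at most $\log|\mathcal A|$ and rewards are normalized. Under~\eqref{eq:lr-condition} the first term is at most the second, so the gap is at most $2/((1-\gamma)^2T)$. This is at most $\epsilon$ once $T\ge 2/((1-\gamma)^2\epsilon)$.

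\emph{Main obstacle.} The delicate part is the telescoping bound in Step 3. It requires the sum of $\log Z_t$ terms under $d^\star$ to be bounded by the value improvements, and that comparison relies on $d^\star\ge(1-\gamma)\rho$ componentwise. The constant also depends on a reward-range normalization, which I would state explicitly. Given that the reward~\eqref{reward} is unbounded as written, this should be handled by the finite-state truncation at $L$.
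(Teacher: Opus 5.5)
Your proposal is correct and is essentially the route the paper takes: the paper gives no proof of its own and imports this result from the softmax-NPG literature, and your Steps 1--3 reconstruct that standard argument. Those steps are the block-diagonal Fisher matrix solved by $A^{(t)}/(1-\gamma)$ up to null-space shifts, the improvement lemma via performance difference and Jensen, and the telescoping KL bound with uniform initialization and unit-range rewards. One minor correction to your ``main obstacle'' remark: the dominance fact actually needed is $d^{\pi^{(t+1)}}_{d^\star}\ge(1-\gamma)\,d^\star$, i.e.\ your improvement lemma applied with $\mu=d^\star$, not $d^\star\ge(1-\gamma)\rho$; and you are right that, for the paper's unnormalized reward, the $2/((1-\gamma)^2\epsilon)$ bound only holds after rescaling the (truncated, hence bounded) reward to a unit range.
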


In order to obtain more accurate estimates of the advantage function, we employ an episodic NPG, detailed in Algorithm~\ref{alg:NPG}. 


\subsection{EIP-1559 as a Special Case of a Natural Policy Gradient Algorithm}
As discussed earlier, the EIP-1559 mechanism does not account for the dynamic evolution of the transaction mempool in its price update rule. We therefore examine whether the static version of Algorithm~\ref{alg:NPG} has an update rule analogous to that of EIP-1559.

Suppose the mempool is sufficiently congested that additions or removals of transactions do not affect the state; that is, the system reaches a stationary state $\mathbf{S}_t = \mathbf{S}$ for all $t$, thus resembling a stateless, static setting. Under this assumption, the $Q$-function under policy $\pi$ can be written as
\begin{align*}
    Q^\pi(\mathbf{S},a) &= 
    r(\mathbf{S},a) + \gamma V^\pi(\mathbf{S}).
\end{align*}
We can write the value functions as
\begin{align*}
V^\pi(\mathbf{S})
= \mathbb{E}_{a \sim \pi(\cdot \mid \mathbf{S})}\left[
Q^\pi(\mathbf{S},a) \right]
= \mathbb{E}_{a \sim \pi(\cdot \mid \mathbf{S})}\left[
r(\mathbf{S},a) + \gamma V^\pi(\mathbf{S})\right],
\end{align*}
which implies
\begin{align*}
(1-\gamma)\,V^\pi(\mathbf{S})
=
\mathbb{E}_{a \sim \pi(\cdot \mid \mathbf{S})}\left[
r(\mathbf{S},a)
\right].
\end{align*}
The advantage function is defined as
\begin{align*}
A^\pi(\mathbf{S},a) = Q^\pi(\mathbf{S},a) - V^\pi(\mathbf{S}).
\end{align*}
Substituting the expressions above yields,
\begin{align}
A^\pi(\mathbf{S},a) &= r(\mathbf{S},a) + \gamma V^\pi(\mathbf{S}) - V^\pi(\mathbf{S})\notag
\\
&= r(\mathbf{S},a) - (1-\gamma)V^\pi(\mathbf{S}) \notag
\\
&= r(\mathbf{S},a) - \mathbb{E}_{a' \sim \pi(\cdot \mid \mathbf{S})}\left[
r(\mathbf{S},a')
\right] \label{eq:advantage-fixed-state}.
\end{align}
\begin{algorithm}[t]
\caption{Episodic NPG for Dynamic Pricing of the Ethereum Mempool}
\begin{algorithmic}
\label{alg:NPG}
\STATE \textbf{Input:} transaction size set $\mathcal{Q} \!=\! \{q_1,\dots,q_m\}$, value per size set $\mathcal{V} \!=\! \{v_1,\dots,v_n\}$, target block size $B$, scheduling protocol $\mathcal{F}$, learning rate $\eta$, discount factor $\gamma$, horizon $H$, number of iterations $T$.
\STATE \textbf{Initialize:} environment $\mathcal{E}$, initial state $\mathbf{S}_0 \leftarrow \textsc{Reset}(\mathcal{E})$, initial policy $\pi^{(0)}(\cdot \mid \mathbf{S}_0)$ over $n$ actions.
\FOR{iteration $k = 0$ to $T-1$}
    \STATE Initialize episode buffer $\mathcal{B}_k \leftarrow \emptyset$.
    \STATE Set $\mathbf{S}_0^{(k)} \leftarrow \textsc{Reset}(\mathcal{E})$.
    \FOR{$t = 0$ to $H-1$}
        \STATE Sample $a^{(k)}_t \sim \pi^{(k)}(\cdot \mid \mathbf{S}^{(k)}_t)$.
        \STATE Apply the scheduling protocol $\mathcal{F}$ to obtain
        $\mathcal{F}(\mathbf{S}_t,a_t) \in \mathbb{Z}_+^{m \times n}$,
        where the $(i,j)$-th element denotes the number of transactions scheduled with size $q_i$ and value per size $v_j$.
        \STATE Calculate the reward $r_t^k$ using Eq.~\eqref{reward}.
        \STATE Calculate the next state $\mathbf{S}_{t+1}^{(k)}$ using Eq.~\eqref{eq:state-dynamics}.
        \STATE Step environment: $(r^{(k)}_t, \mathbf{S}^{(k)}_{t+1}) \leftarrow \textsc{Step}(\mathcal{E}, a^{(k)}_t)$.
        \STATE Store transition:
        \STATE \hspace{1.2em} $\mathcal{B}_k \leftarrow \mathcal{B}_k \cup \{(\mathbf{S}^{(k)}_t,a^{(k)}_t,r^{(k)}_t,\mathbf{S}^{(k)}_{t+1})\}$.
    \ENDFOR
    \STATE Compute advantage estimates $\widehat{A}_k(\mathbf{S},a)$ using the collected trajectory.
    \FOR{each state $\mathbf{S}$ for which $(\mathbf{S},\cdot)$ appears in $\mathcal{B}_k$}
        \STATE Compute the normalization constant:
        \STATE \hspace{1.2em} $Z_{\mathbf{S},k} \leftarrow \sum_{a'} \pi^{(k)}(a'\mid \mathbf{S})\exp\left(\frac{\eta \widehat{A}_k(\mathbf{S},a')}{1-\gamma}\right)$.
        \FOR{each action $a \in \{1,\dots,n\}$}
            \STATE Update: 
            \STATE \hspace{1.2em} $\pi^{(k+1)}(a\mid \mathbf{S}) =
            \frac{\pi^{(k)}(a\mid \mathbf{S})\exp\left(\frac{\eta \widehat{A}_k(\mathbf{S},a)}{1-\gamma}\right)}{Z_{\mathbf{S},k}}$.
        \ENDFOR
    \ENDFOR
\ENDFOR
\end{algorithmic}
\end{algorithm}
By setting the holding cost $c_{\text{hold}}=0$ in the reward function~\eqref{reward}, and by substituting the advantage function in Eq.~\eqref{eq:advantage-fixed-state} in the policy update in Algorithm~\ref{alg:NPG}, we get
\begin{align}
        \pi^{(t+1)}(a\mid \mathbf{S}) &= \frac{\pi^{(t)}(a\mid \mathbf{S})\exp\left(\frac{\eta \widehat{A}_k(\mathbf{S},a)}{1-\gamma}\right)}{\sum_{\hat{a}} \pi^{(t)}(\hat{a}\mid \mathbf{S})\exp\left(\frac{\eta \widehat{A}_k(\mathbf{S},\hat{a})}{1-\gamma}\right)}\nonumber
         \\&\simeq \frac{\pi^{(t)}(a\mid \mathbf{S})\exp\left(\frac{\eta \left(r(\mathbf{S},a) - \mathbb{E}_{a' \sim \pi}\left[
        r(\mathbf{S},a')\right]\right)}{1-\gamma}\right) }{\sum_{\hat{a}} \pi^{(t)}(\hat{a}\mid \mathbf{S})\exp\left(\frac{\eta \left(r(\mathbf{S},\hat{a}) - \mathbb{E}_{a' \sim \pi}\left[
        r(\mathbf{S},a')\right]\right)}{1-\gamma}\right)}\nonumber
        \\&= 
        \frac{\pi^{(t)}(a\mid \mathbf{S})\exp\left(\frac{\eta r(\mathbf{S},a)}{1-\gamma}\right) }{\sum_{\hat{a}} \pi^{(t)}(\hat{a}\mid \mathbf{S})\exp\left(\frac{\eta r(\mathbf{S},\hat{a})}{1-\gamma}\right)}\nonumber
        \\&\propto \pi^{(t)}(a\mid \mathbf{S})\exp\left(\frac{\eta r(\mathbf{S},a)}{1-\gamma}\right) \nonumber
        \\&=
        \pi^{(t)}(a\mid \mathbf{S})\exp\Bigg(\frac{- \eta c_{\mathrm{over}}  \left(Q_{\mathrm{sched}}^{\mathcal{F}}(\mathbf{S},a)- B\right)^{+}}{1-\gamma}\Bigg),
\end{align}
which closely coincides with the EIP-1559 price update rule \eqref{eq:EIP-price-rule}. In the next section, we show that when $c_{\mathrm{over}} \gg c_{\mathrm{hold}}$, the average amount of scheduled transactions converges to the target block capacity $B$, consistent with the results shown for EIP-1559 \cite{babaioff2024optimality}.

\subsection{Numerical Results}

In this section, we present numerical results to evaluate the performance of Algorithm~\ref{alg:NPG} under various settings of the dynamic transaction scheduling problem. In our simulations, we consider different choices of parameters for transaction sizes, values per size, target block capacity $B$, the overshoot penalty $c_{\mathrm{over}}$, and the holding cost $c_{\mathrm{hold}}$. Throughout this section, we run the algorithm for a total of $10{,}000$ iterations, where each iteration corresponds to an episode of length $H=400$. The discount factor is set to $\gamma = 0.95$, and the learning rate is tuned for each setting while ensuring that the condition in Eq.~\eqref{eq:lr-condition} is satisfied. For the following experiments, we also assume uniform deterministic arrivals, i.e., exactly one transaction of each type arrives in the system at each timestep.

For the first set of experiments (\emph{Setting~1}), we fix the holding cost at $c_{\mathrm{hold}} = 1$ and vary the overshoot penalty over the set
\[
c_{\mathrm{over}} \in \{0, 1, 10, 30, 50, 70, 100, 150\}.
\]
As an example, Fig.~\ref{fig:Setting1-crange} illustrates Setting~1, which considers transactions with sizes $\mathcal{Q} = \{2, 4, 5, 7\}$ and per-unit values $\mathcal{V} = \{2, 4, 9\}$. The figure reports the average scheduled transaction volume as a function of the overshoot penalty for three different target block capacities $B$. A key observation is that as $c_{\mathrm{over}}$ increases, the average volume of scheduled transactions converges toward the target block capacity, consistent with the behavior observed under EIP-1559.
 
\begin{figure}[t]
    \centering
    \begin{subfigure}{0.32\textwidth}
        \centering
        \includegraphics[width=\linewidth]{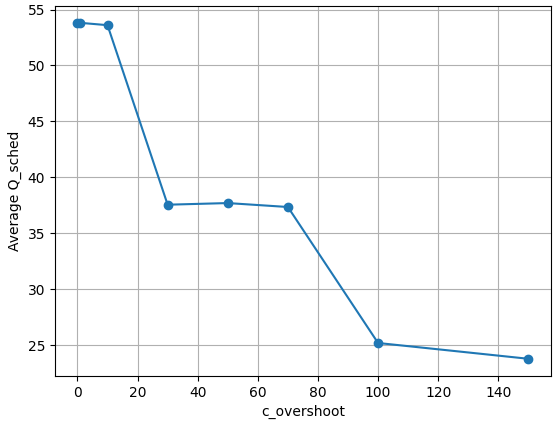}
        \caption{$B=30$}
        \label{fig:setting1_B30}
    \end{subfigure}
    \hfill
    \begin{subfigure}{0.32\textwidth}
        \centering
        \includegraphics[width=\linewidth]{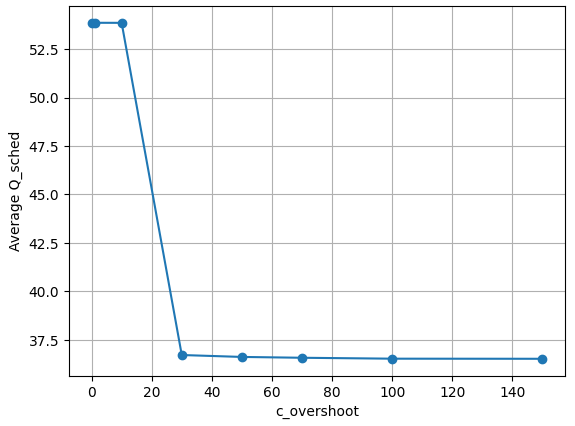}
        \caption{$B=35$}
        \label{fig:setting1_B35}
    \end{subfigure}
    \hfill
    \begin{subfigure}{0.32\textwidth}
        \centering
        \includegraphics[width=\linewidth]{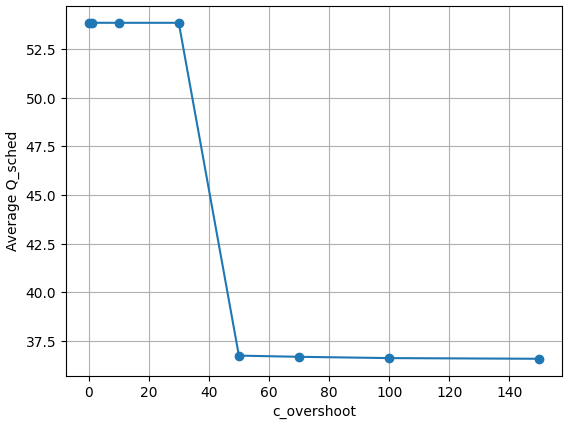}
        \caption{$B=40$}
        \label{fig:setting1_B40}
    \end{subfigure}
    \caption{Average scheduled transaction volume as a function of the overshoot penalty for Setting 1.}
    \label{fig:Setting1-crange}
\end{figure}

We now conduct a set of experiments (\emph{Setting 2}) in which $c_{\mathrm{hold}}$ and $c_{\mathrm{over}}$ are fixed, while the target block capacity is varied over the range
\[
B \in \{28, 30, 32, 33, 35, 38, 40, 42, 45\}.
\]
For Setting~2, the total arrival size is $54$, so we vary $B$ from $54/2$ up to values close to the total arrival size. The corresponding results are shown in Fig.~\ref{fig:Setting1-Brange}. 

When the gap between $c_{\mathrm{hold}}$ and $c_{\mathrm{over}}$ is small, the scheduling algorithm admits nearly all incoming transactions. This behavior is illustrated in Fig.~\ref{fig:B-range-1}, which corresponds to the parameter setting $c_{\mathrm{hold}} = 1$ and $c_{\mathrm{over}} = 8$. In this case, only a slight decrease in the amount of scheduled transactions is observed when the block capacity $B$ drops below the threshold specified in Lemma~\ref{lemma:B-lower-bound}. By contrast, as the gap between $c_{\mathrm{hold}}$ and $c_{\mathrm{over}}$ increases, the degradation in scheduling performance becomes more pronounced. This is illustrated in Fig.~\ref{fig:B-range-2}, which corresponds to $c_{\mathrm{hold}} = 1$ and $c_{\mathrm{over}} = 35$. A similar trend is observed in Fig.~\ref{fig:B-range-3}, where the overshoot penalty is further increased to $c_{\mathrm{over}} = 40$.

\begin{figure}[t]
    \centering
    \begin{subfigure}{0.32\textwidth}
        \centering
        \includegraphics[width=\linewidth]{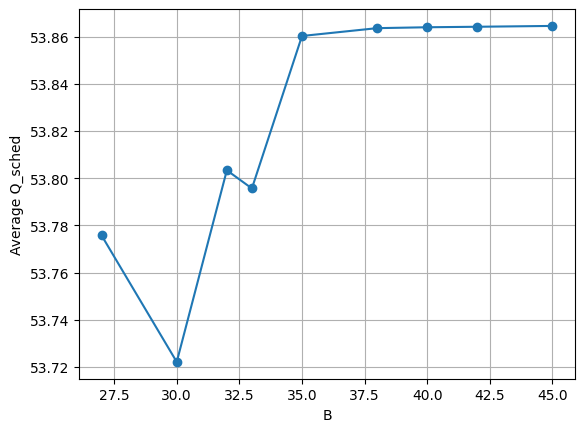}
        \caption{$c_{\mathrm{hold}} = 1$, $c_{\mathrm{over}} = 8$}
        \label{fig:B-range-1}
    \end{subfigure}
    \hfill
    \begin{subfigure}{0.32\textwidth}
        \centering
        \includegraphics[width=\linewidth]{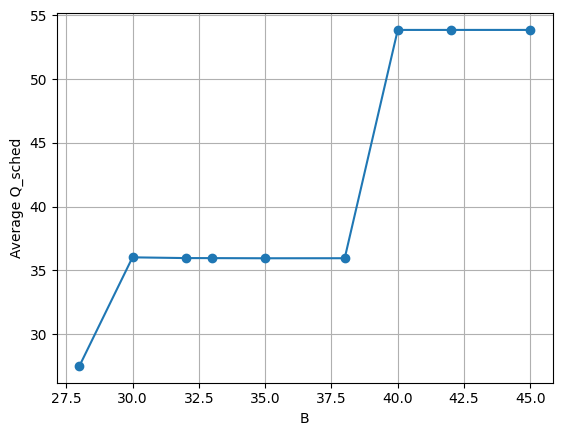}
        \caption{$c_{\mathrm{hold}} = 1$, $c_{\mathrm{over}} = 35$}
        \label{fig:B-range-2}
    \end{subfigure}
    \hfill
    \begin{subfigure}{0.32\textwidth}
        \centering
        \includegraphics[width=\linewidth]{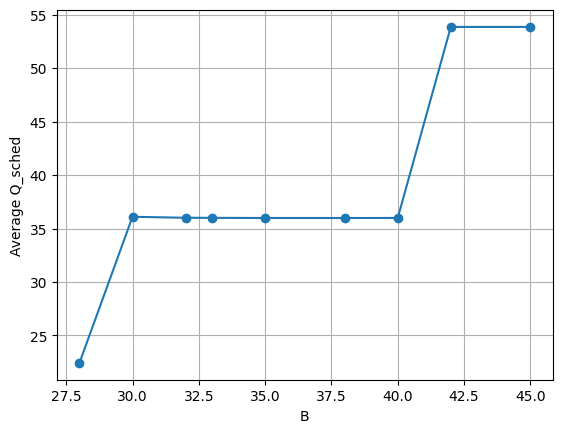}
        \caption{$c_{\mathrm{hold}} = 1$, $c_{\mathrm{over}} = 40$}
        \label{fig:B-range-3}
    \end{subfigure}
    \vspace{-0.2cm}
    \caption{Average scheduled transaction volume as a function of the target block capacity for Setting 1.}
    \label{fig:Setting1-Brange}
\end{figure}

\subsubsection{Stochastic Arrival Process}\label{sec:stochastic-arrival}
Here, we extend our simulation results to stochastic arrivals, assuming that transactions enter the pool according to a stationary random process. Let $A[i,j] \sim \mathrm{Poisson}(\mu_i)$ denote the number of arriving transactions of size $q_i$ and value $v_j$ in a single time step. The total arrival size in one time step is then given by
\begin{equation}
Q_{\mathrm{arr}}
\coloneqq
\sum_{i=1}^m \sum_{j=1}^n q_i\  A[i,j].
\end{equation}
Letting $\mu_i = \frac{2B}{m n q_i}$, the expected total arrival size is
\begin{align}
\mathbb{E}\left[ Q_{\mathrm{arr}} \right] = \sum_{i=1}^m \sum_{j=1}^n q_i \ \mathbb{E}\left[ A[i,j] \right] = \sum_{i=1}^m \sum_{j=1}^n q_i \mu_i = \sum_{i=1}^m n q_i \left( \frac{2B}{m n q_i}\lambda \right) = \sum_{i=1}^m \frac{2B}{m}\lambda = 2B\lambda.
\end{align}

Therefore, to avoid the trivial case, we choose $\lambda \in (0.5, 1)$, which ensures that the expected total arrival size exceeds the target block size $B$ while remaining below the hard block capacity $2B$.

For stochastic arrivals, we consider a simple setting, referred to as \emph{Setting~3}, with
$\mathcal{Q} = \{2,4\}$ and $\mathcal{V} = \{4,9\}$. We increase the episode length to $1000$ time steps and set the arrival rate to $\lambda = 0.6$. We plot the average amount of scheduled transactions by fixing $c_{\mathrm{hold}} = 1$ and varying
\[
c_{\mathrm{over}} \in \{0, 1, 10, 30, 50, 70, 100, 150\}
\]
for three different target block capacities, as shown in Fig.~\ref{fig:Stochastic-Setting3-c_range}. Unlike the deterministic arrival case, here we can vary the target block size $B$ over a wide range without being constrained by a fixed number of arrivals. This is because the Poisson arrival rate is chosen such that the expected total arrival size at each time step is $2B \lambda$, which lies between $B$ and $2B$.

\begin{figure}[t]
    \centering
    \begin{subfigure}{0.32\textwidth}
        \centering
        \includegraphics[width=\linewidth]{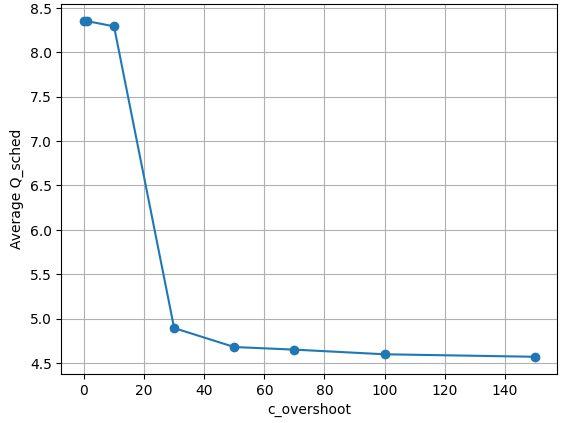}
        \caption{$B=7$}
        \label{fig:S3,B=7}
    \end{subfigure}
    \hfill
    \begin{subfigure}{0.32\textwidth}
        \centering
        \includegraphics[width=\linewidth]{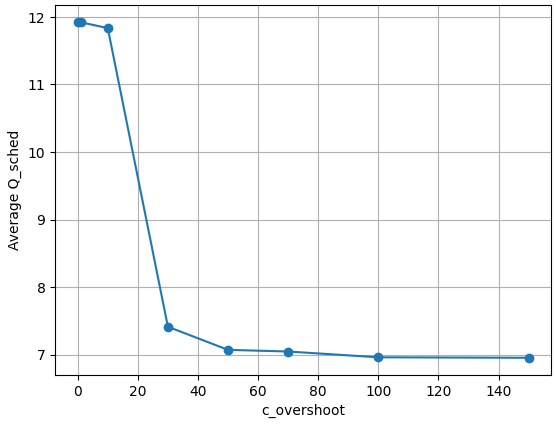}
        \caption{$B=10$}
        \label{fig:S3,B=10}
    \end{subfigure}
    \hfill
    \begin{subfigure}{0.32\textwidth}
        \centering
        \includegraphics[width=\linewidth]{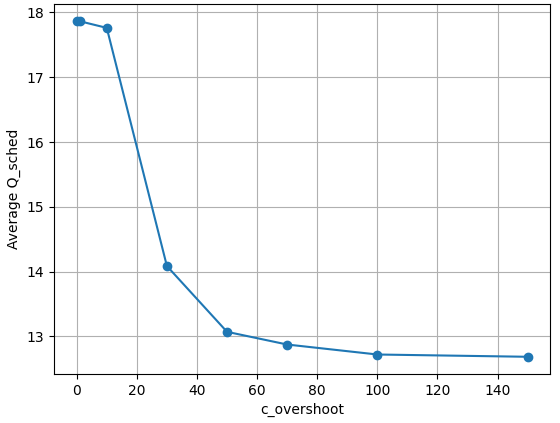}
        \caption{$B=15$}
        \label{fig:S3,B=15}
    \end{subfigure}
    \vspace{-0.2cm}
    \caption{Average scheduled transaction volume as a function of the overshoot penalty in \emph{Setting}~3 under stochastic arrivals with $\lambda = 0.6$.}
    \label{fig:Stochastic-Setting3-c_range}
\end{figure}

As shown in Fig.~\ref{fig:Stochastic-Setting3-c_range}, for small overshoot costs, the average scheduled transaction volume closely matches the average incoming transaction volume, which varies across different target block capacities $B$. As the overshoot cost increases, the average scheduled volume decreases and converges to a level below $B$, reflecting the policy’s attempt to avoid overshoot penalties.

\section{Optimality of Threshold Policies for Homogeneous Dynamic Scheduling Problem}\label{sec:threshold}

While the NPG Algorithm \ref{alg:NPG} provides an efficient method to obtain an optimal policy for the dynamic scheduling problem, the structure of the optimal policy can be quite complex and depend on many parameters involved in the problem. However, it turns out that by restricting the problem to certain special cases, one can obtain a closed-form characterization of the optimal scheduling policy, which provides useful insights into the behavior of the optimal policy in general. Therefore, in this section we consider a special case of the MDP introduced in Section~\ref{sec:dynamic}, in which all transactions are homogeneous, i.e., they have the same value and size. In this setting, pricing becomes irrelevant, since no transaction is preferred over another. As a result, the \emph{homogeneous} dynamic scheduling problem reduces to an equivalent formulation in which the protocol directly controls the total volume of scheduled transactions through the scheduling protocol $\mathcal{F}$, as described in detail below. 

\subsection{Reformulation of the Dynamic Scheduling Problem under Homogeneous Arrivals}

Let $s_t$ denote the state of the system at time $t$, represented by a scalar quantity corresponding to the total size of transactions in the mempool at time $t$. Let $A_t$ denote the total size of transactions arriving at time $t$, where $\{A_t\} \sim A$ is an i.i.d.\ process with $A \in [0, 2B]$. Since setting a price action $a_t$ either makes all transactions eligible or ineligible (depending on whether $v_{a_t}$ is smaller or larger than the common transaction value $v$, respectively), the scheduler should clearly set a low price to ensure that all transactions are eligible; otherwise, no transactions can be scheduled. In this case, the only remaining degree of freedom for the scheduler is the choice of scheduling protocol $\mathcal{F}$.\footnote{Note that in Section~\ref{sec:dynamic}, we fixed the scheduling protocol $\mathcal{F}$ to the one described in Remark~\ref{rem:schedul-F}. Here, we show that this structure is indeed optimal in the homogeneous setting.} 

In order to characterize the optimal scheduling protocol in the homogeneous setting, let us define the action $f_t$ as the amount of overshoot above the target capacity $B$ scheduled at time $t$. By feasibility constraints, it must hold that\footnote{For simplicity of the analysis, we assume that $f_t$ can take fractional values, which is realistic and becomes more accurate when transaction sizes are much smaller than the target block capacity $B$.}
\[
    -B \leq f_t \leq \min\{s_t, 2B\} - B.
\]
The state then evolves as 
\[
s_{t+1} = s_t+A_t-\left(f_t+B\right).
\]
The \emph{instantaneous cost} at time $t$ is given by 
\[
c(s_t, f_t)=c_{\mathrm{hold}} s_t + c_{\mathrm{over}} (f_t)^{+},
\]
which is the negative of the instantaneous reward function defined in Eq.~\eqref{reward}. Note that this cost is bounded since we have imposed a cap of $L$ on the state, i.e., there exists $L > 0$ such that $s_t \leq L\ \forall t$. For the remainder of this section, we assume that $c_{\mathrm{over}} > c_{\mathrm{hold}}$, as the analysis for the other case is similar (see Remark~\ref{rem:trivial-hold} below). 

As before, we consider the infinite horizon discounted cost setting, i.e.,
\begin{equation}
\label{J-infinite-horizon}
    J(s) = c_{\mathrm{hold}} s + \min_{-B \leq f\leq \min\{s,2B\}-B} \left\{c_{\mathrm{over}} (f)^{+} + \gamma \mathbb{E}[ J(s')] \right\},
\end{equation}
where $s' = s + A - (f + B)$, and the expectation is taken over random arrivals $A$. Now, if we define $$Q(s,f) = c_{\mathrm{hold}} s + c_{\mathrm{over}} (f)^{+} + \gamma \mathbb{E}[ J(s') ],$$ we can rewrite \eqref{J-infinite-horizon} equivalently as
\begin{equation}
\label{Q-definition}
    J(s) = \min_{-B \leq f\leq \min\{s,2B\}-B} \left[Q(s,f)\right].
\end{equation}

\subsection{Characterization of the Optimal Scheduling Policy}

In order to characterize the structure of the optimal scheduling protocol for the homogeneous dynamic scheduling problem described above, we apply a value iteration method to compute the optimal cost function $J^*(\cdot)$. To that end, let $J_0(s) = c_{\mathrm{term}} s \ \forall s$. The value iteration update equals
\begin{equation}
\label{J-finite-horizon}
    J_{k+1}(s) = c_{\mathrm{hold}} s 
    + \min_{-B \leq f \leq \min\{s,2B\}-B} 
    \left\{ c_{\mathrm{over}} (f)^{+} 
    + \gamma \mathbb{E}\left[ J_k(s') \right] \right\},
\end{equation}
which is iterated until convergence to the fixed point. Since the instantaneous cost is bounded and $\gamma \in (0,1)$, value iteration converges to the optimal cost function \cite{bertsekas2012dynamic}, i.e.,
\[
    J^*(s) = \lim_{k\to\infty} J_k(s).
\]

Next, we establish the following two properties of the optimal cost function. The proof of these lemmas are deferred to appendices \ref{proof-J-non-decreasing} and \ref{proof-J-k-convex}, respectively.

\begin{lemma}
\label{J-k-convex}
    $J^*(\cdot)$ is a convex function. Moreover, if we let $J^*(s) = \min_{-B \leq f \leq \min\{s,2B\}-B} \left[Q^*(s,f)\right]$, then for any $s \geq 0$, the function $Q^*(s,\cdot)$ is convex in its second argument.    
\end{lemma}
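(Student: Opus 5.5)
We prove convexity by induction along the value iteration \eqref{J-finite-horizon}, and then pass to the limit $J^*=\lim_k J_k$, since pointwise limits of convex functions are convex. The base case holds because $J_0(s)=c_{\mathrm{term}}s$ is linear. For the inductive step, assume $J_k$ is convex on the state space $[0,L]$. It is cleaner to reparametrize the action by the post-decision quantity $y = s - (f+B)$, i.e., the amount left in the pool after scheduling. Then $s' = y + A$, and $\mathbb{E}[J_k(y+A)]$ is convex in $y$ as an expectation of convex functions of affine arguments. The overshoot cost $c_{\mathrm{over}}(f)^+ = c_{\mathrm{over}}(s-y-B)^+$ is jointly convex in $(s,y)$, being the positive part of an affine function. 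The stage term $c_{\mathrm{hold}}s$ is linear. Hence
\[
G_k(s,y) := c_{\mathrm{hold}} s + c_{\mathrm{over}}(s-y-B)^+ + \gamma\,\mathbb{E}[J_k(y+A)]
\]
is jointly convex in $(s,y)$.

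Next we check that the feasible set is convex. The constraint $-B\le f\le \min\{s,2B\}-B$ becomes $\max\{0,s-2B\}\le y\le s$. Both bounds are convex and concave functions of $s$ in the appropriate directions: the lower bound is convex and the upper bound is affine. So the set $C=\{(s,y): \max\{0,s-2B\}\le y\le s,\ 0\le s\le L\}$ is convex. We then apply the standard fact that partial minimization of a jointly convex function over a convex set yields a convex function. Concretely, take $s_1,s_2$ with near-minimizers $y_1,y_2$. The convex combination $(\lambda s_1+(1-\lambda)s_2,\lambda y_1+(1-\lambda)y_2)$ lies in $C$, and joint convexity gives $J_{k+1}(\lambda s_1+(1-\lambda)s_2)\le \lambda J_{k+1}(s_1)+(1-\lambda)J_{k+1}(s_2)$. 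This closes the induction. Passing to the limit gives convexity of $J^*$.

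For the second claim, fix $s$. Then $Q^*(s,f)=c_{\mathrm{hold}}s + c_{\mathrm{over}}(f)^+ + \gamma\mathbb{E}[J^*(s+A-f-B)]$. Here $(f)^+$ is convex in $f$, and $f\mapsto J^*(s+A-f-B)$ is convex as the composition of the convex $J^*$ with an affine map. The expectation preserves convexity, so $Q^*(s,\cdot)$ is convex on the interval of feasible $f$.

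The main technical obstacle is the state cap $L$ and the boundary behavior. The next state $s+A-f-B$ may leave $[0,L]$, and the paper enforces the cap by extra forced scheduling, which could break the affine structure of the transition. My plan is to handle this in two ways. First, note that the nonnegativity $s'\ge 0$ is automatic when $y\ge 0$. Second, treat the truncation at $L$ either by extending $J_k$ convexly beyond $L$ (e.g., linearly with its right slope) or by adding the constraint $y+A\le L$ almost surely, i.e., $y\le L-\sup A$. This keeps $C$ convex. Under either convention the argument above goes through verbatim. If the paper's truncation is genuinely nonconvex, I would instead assume $L$ large enough that the cap is never active under the optimal policy.
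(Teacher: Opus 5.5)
Your proposal is correct and follows essentially the same route as the paper: induction along value iteration, joint convexity of the stage-plus-continuation function, convexity of the feasible region, the partial-minimization argument, and passage to the pointwise limit, with convexity of $Q^*(s,\cdot)$ then coming from convexity of $J^*$ composed with an affine map. The differences are minor. You work in the post-decision variable $y=s-f-B$, which the paper uses only later in the threshold theorem. You also explicitly address the state cap $L$, whereas the paper's proof silently takes the state space to be $[0,\infty)$.
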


\begin{lemma}
\label{J-non-decreasing}
    $J^*$ is non-decreasing. 
\end{lemma}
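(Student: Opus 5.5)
We prove that $J^*$ is non-decreasing by induction along the value iteration \eqref{J-finite-horizon}, and then pass to the limit $J^*=\lim_k J_k$. The base case is immediate: $J_0(s)=c_{\mathrm{term}} s$ is non-decreasing because $c_{\mathrm{term}}\ge 0$. The inductive hypothesis is that $J_k$ is non-decreasing on $[0,L]$. Fix $0\le s_1\le s_2$; we must show $J_{k+1}(s_1)\le J_{k+1}(s_2)$.

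The natural tool is a coupling of actions that keeps the next state comparable. Let $f_2$ be a minimizer (or an $\epsilon$-minimizer) for $s_2$ in \eqref{J-finite-horizon}. For $s_1$ we choose
\[
f_1 = \max\{-B,\ f_2-(s_2-s_1)\},
\]
or, more simply, $f_1=\min\{f_2,\min\{s_1,2B\}-B\}$. The aim is that $f_1$ is feasible for $s_1$, that $(f_1)^+\le (f_2)^+$, and that the next states satisfy $s_1'\le s_2'$ pathwise for every arrival $A$.

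With the first choice, $f_1\le f_2$ gives $(f_1)^+\le (f_2)^+$. The next states are $s_i'=s_i+A-f_i-B$, and since $f_1\ge f_2-(s_2-s_1)$ we get $s_1'\le s_2'$. For feasibility we need $f_1\le \min\{s_1,2B\}-B$. If $f_1=-B$ this is trivial. Otherwise $f_1=f_2-(s_2-s_1)\le \min\{s_2,2B\}-B-(s_2-s_1)\le s_1-B$, and it is also $\le f_2\le B$, so the bound holds.

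With feasibility and both comparisons in hand, the inductive hypothesis gives $\mathbb{E}[J_k(s_1')]\le \mathbb{E}[J_k(s_2')]$. Combined with $c_{\mathrm{hold}}s_1\le c_{\mathrm{hold}}s_2$ and the overshoot comparison, we obtain
\[
J_{k+1}(s_1)\le c_{\mathrm{hold}}s_1+c_{\mathrm{over}}(f_1)^++\gamma\mathbb{E}[J_k(s_1')]\le J_{k+1}(s_2).
\]
Monotonicity is preserved under pointwise limits, so $J^*$ is non-decreasing.

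The main obstacle is the state cap $L$ and the requirement $s'\ge 0$. If the truncation at $L$ acts as a clamp $s'\mapsto\min\{s',L\}$, the clamp is monotone and the comparison $s_1'\le s_2'$ survives. If instead the cap is enforced by forcing extra scheduling (as in Remark~\ref{rem:schedul-F}), we must check that the forced overshoot is monotone in $s$. For nonnegativity, the constraint $f\le s-B$ already gives $s'\ge A\ge 0$. A cleaner alternative, if needed, is to reparametrize the action as the post-decision amount $y=s-f-B\in[\max\{0,s-2B\},\,s]$. The cost then becomes $c_{\mathrm{over}}(s-y-B)^+$, which is non-decreasing in $s$ for fixed $y$, while the feasible set for larger $s$ only shifts upward. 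Choosing $y_1=\min\{y_2,s_1\}$ then yields the same comparison directly.
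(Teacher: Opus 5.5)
Your proposal is correct and follows essentially the same route as the paper. The shared structure is induction along the value iteration: map a minimizer $f_2$ at $s_2$ to a feasible action at $s_1$ with no larger overshoot term and a pathwise no-larger next state, then pass to the limit. Your secondary choice $f_1=\min\{f_2,\min\{s_1,2B\}-B\}$ is exactly the paper's two-case argument: keep $f_2$ if it is feasible for $s_1$, otherwise use $\hat f=\min\{s_1,2B\}-B$ and compare next states via monotonicity of $s-\min\{s,2B\}$. Your primary choice $\max\{-B,\,f_2-(s_2-s_1)\}$ is an equally valid variant that keeps the post-decision state $y=s-f-B$ unchanged whenever possible.
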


Finally, by combining Lemmas~\ref{J-k-convex} and~\ref{J-non-decreasing}, we show that the optimal policy is a threshold policy.
\begin{theorem}
There exists a threshold $s^* \geq B$ such that the optimal policy $f^*$ is given by
    \begin{align*}
        f^*(s) =
        \begin{cases}
        \min\{s-B, 0\}, & 0 \leq s \leq s^*, \\[6pt]
        \min\{s - s^*, B\}, & s > s^*.
        \end{cases}
    \end{align*}
\end{theorem}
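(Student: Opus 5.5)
The plan is to rewrite the Bellman minimization in \eqref{Q-definition} in terms of a \emph{post-decision state}, the amount left in the pool after scheduling. Once that is done, Lemmas~\ref{J-k-convex} and~\ref{J-non-decreasing} turn the problem into minimizing a one-dimensional convex function over an interval, and the threshold structure can be read off directly.

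\textbf{Step 1: change of variables.} For a given state $s$, set $y := s - (f+B)$, so that $s' = y + A$. The constraint $-B \le f \le \min\{s,2B\}-B$ becomes
\[
y \in \bigl[\max\{0, s-2B\},\ s\bigr].
\]
The overshoot is $(f)^+ = (s-B-y)^+$. Define $G(y) := \gamma\,\mathbb{E}[J^*(y+A)]$. Then
\[
J^*(s) = c_{\mathrm{hold}} s + \min_{y\in[\max\{0,s-2B\},\,s]} \phi_s(y), \qquad \phi_s(y) := c_{\mathrm{over}}(s-B-y)^+ + G(y).
\]
By Lemma~\ref{J-k-convex}, $J^*$ is convex, and by Lemma~\ref{J-non-decreasing} it is non-decreasing. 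Taking expectations over $A$ preserves both properties, so $G$ is convex and non-decreasing. Hence $\phi_s$ is convex, which is consistent with the convexity of $Q^*(s,\cdot)$ in Lemma~\ref{J-k-convex}.

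\textbf{Step 2: define the threshold.} Write $G'_+$ and $G'_-$ for the one-sided derivatives of $G$. Let
\[
y^* := \inf\{y \ge 0 : G'_+(y) \ge c_{\mathrm{over}}\},
\]
with $y^*=0$ if the set contains $0$, and $y^* = \infty$ (or $L$, under the state cap) if the set is empty. Convexity gives $G'_-(y) \le c_{\mathrm{over}}$ for $y \le y^*$ and $G'_+(y) \ge c_{\mathrm{over}}$ for $y \ge y^*$. Set $s^* := B + y^* \ge B$.

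\textbf{Step 3: minimize $\phi_s$ region by region.}
\begin{itemize}
\item On the region $y \ge s-B$, the overshoot term vanishes. Since $G$ is non-decreasing, the smallest feasible $y$ there is optimal, namely $y = \max\{s-B, 0\}$. If $s \le 2B$ this lies inside the constraint interval; the case $s > 2B$ is handled by the lower bound $s-2B$ in the next item.
\item On the region $y < s-B$, $\phi_s$ has slope $G'(y) - c_{\mathrm{over}}$. This slope is $\le 0$ up to $y^*$ and $\ge 0$ beyond it.
\end{itemize}
Combining the two regions, the constrained minimizer is obtained by projecting onto the feasible interval:
\[
y(s) = \begin{cases} \max\{s-B,0\}, & s-B \le y^*,\\ \max\{y^*, s-2B\}, & s-B > y^*. \end{cases}
\]
Translating back through $f = s - B - y$ gives the two cases of the statement:
\begin{itemize}
\item For $s \le s^*$: $f^*(s) = s - B - \max\{s-B,0\} = \min\{s-B, 0\}$.
\item For $s > s^*$: $f^*(s) = s - B - \max\{y^*, s-2B\} = \min\{s - s^*, B\}$.
\end{itemize}

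\textbf{Main obstacle.} The delicate part is Step 2 rather than the algebra. First, $G$ need not be differentiable, so the argument must use one-sided derivatives and subgradients to conclude that $\phi_s$ is non-increasing on $[0, y^*]$ and non-decreasing on $[y^*, s-B]$. Second, the degenerate cases need care: $y^* = 0$ gives $s^* = B$, and $y^*$ may be infinite, in which case the cap $L$ makes $s^*$ effectively $L$. Third, when $G$ is flat there may be ties in the minimizer; I would resolve these by always selecting the smallest minimizer $y$, consistent with the monotonicity of $G$. The convexity and monotonicity inputs themselves come directly from Lemmas~\ref{J-k-convex} and~\ref{J-non-decreasing}.
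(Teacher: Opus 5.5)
Your proposal is correct and follows essentially the same route as the paper. Both use the post-decision variable $y = s-(f+B)$ and the convex non-decreasing function $g(y)=\gamma\,\mathbb{E}[J^*(y+A)]$, then project the minimizer of $g(y)-c_{\mathrm{over}}y$ onto $[\max\{0,s-2B\},\,s-B]$; your $y^*$ is exactly the smallest such minimizer (the paper's $x^*$), so $s^*=B+y^*$, and your treatment of one-sided derivatives, ties, and the $y^*=\infty$ case is somewhat more careful than the paper's, whose fallback when no minimizer exists makes $x^*$ depend on $s$.
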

\begin{proof}
Define $y:= s-(f+B)$ to be the pool size after the scheduling happens. We can therefore rewrite Eq.~\eqref{J-infinite-horizon} for the optimal cost function $J^*$ as 
\[
J^*(s) = c_{\mathrm{hold}}s + \min_{\max\{0, s-2B\} \leq y \leq s} \{c_{\mathrm{over}}(s-y-B)^+ + \gamma \mathbb{E}[J^*(y+A)]\}.
\]
Let $g(y) := \gamma \mathbb{E}[J^*(y + A)]$. Since $J^*$ is convex and non-decreasing by Lemmas ~\ref{J-k-convex} and~\ref{J-non-decreasing}, it follows that $g$ is also convex and non-decreasing. Thus, we have
\begin{align*}
    J^*(s) &= 
     \min_{\max\{0, s-2B\} \leq y \leq s} \{c_{\mathrm{hold}}s + c_{\mathrm{over}}(s-y-B)^+ + g(y)\} \\&=
     \min_{\max\{0, s-2B\} \leq y \leq s} [Q^*(s,y)],
\end{align*}
which, in view of Lemma~\ref{J-k-convex}, shows that $Q^*(s,y)$ is convex in $y$ for any $s$. Now we have three cases:
\begin{itemize}
    \item $s \leq B$: In this case, we have $J^*(s) = c_{\mathrm{hold}} s + \min_{0 \leq y \leq s} \{ g(y) \}$, and since $g$ is non-decreasing, the optimal solution is attained at $y^*(s)=0$, and thus $f^*(s)=s-B$.
    \item $B < s \leq y+B$: In this case even though $s>B$, but no overshoot happens. Therefore $y^*(s)+B = s$ and $f^*(s) = 0$.
    \item $y+B\leq s$: In this case, we have $Q^*(s,y)=c_{\mathrm{hold}}s+c_{\mathrm{over}}(s-B)+(g(y)-c_{\mathrm{over}}y)$ and therefore
    \[
    J^*(s) = c_{\mathrm{hold}}s+c_{\mathrm{over}}(s-B) +\min_{\max\{0, s-2B\} \leq y \leq s-B} [g(y) - c_{\mathrm{over}}y].
    \]
    Since $g$ is convex, $g(y)-c_{\mathrm{over}}y$ is also convex. Suppose $x^* = \arg\min_{y\in[0, \infty)} \{g(y) -c_{\mathrm{over}}y\}$. If $x^*\in[\max\{0, s-2B\}, s-B]$, then $y^*(s)=x^*$. If $x^* > s-B$, then $y^*(s)=s-B$, and if $x^*<\max\{0, s-2B\}$, then $y^*(s) = \max\{0, s-2B\}$. In the case of $g(y) - c_{\mathrm{over}}y$ not having a minimizer on $[0, \infty)$, the minimum happens at one of the two ends points of the interval and we have $x^*=y^*(s) = \arg\min_{y\in\{\max\{0, s-2B\}, s-B\}} \{g(y)-c_{\mathrm{over}}y\}$. Thus, in any case, $y^*(s)$ is the projection of $x^*$ to the interval $[\max\{0, s-2B\}, s-B]$, which can be written explicitly as
    \begin{align*}
    y^*(s) &= \min\{\max\{x^*, \max\{0, s - 2B\}\},\, s - B\}.
    \end{align*}
    Subsequently, $f^*(s) = s-y^*(s)-B$, and we have
    \begin{align}\label{eq:f-structure}
        f^*(s) &= s - \min\{\max\{x^*, \max\{0, s - 2B\}\},\, s - B\} - B \cr&=
        \max\{s-B-\max\{x^*, \max\{0, s - 2B\}\}, 0\} \cr&= 
        \max\{\min\{s-B-x^*, \min\{s-B, B\}\}, 0\} \cr&= \min\{\max\{s-(B+x^*),0\},B\}.
    \end{align}
  Note that $f^*(s)\geq 0$, which guarantees that $y^*(s)+B\leq s$. Thus, if we define $s^* = B+x^* \geq B$, then for $s>s^*$, using \eqref{eq:f-structure}, we have $f^*(s) = \min\{s-s^*, B\}$. Moreover, if $s\le s^*$, then $f^*(s)=0$, which corresponds to the boundary case $y^*(s)+B=s$. This boundary is shared by the second and third cases, with no overshoot.
\end{itemize}
Finally, by summarizing all the above cases, we conclude that there exists an $s^* \geq B$ such that 
\begin{align*}
    f^*(s)=\begin{cases} 
         \min\{s, B\}-B & s \leq s^*\\[10 pt]
         \min\{s-s^*, B\} & s^*<s.
    \end{cases}
\end{align*}    
\end{proof}

\begin{rem}\label{rem:trivial-hold}
Using similar reasoning, one can show that when $c_{\mathrm{over}} \leq c_{\mathrm{hold}}$, the optimal policy is to schedule as many transactions as possible up to the hard capacity $2B$. That is,
\[
f^*(s) = \min\{s-B, B\}.
\]
\end{rem}

The results in this section imply that, as long as the pool congestion remains below a certain threshold, the optimal policy schedules as little as possible. In our setting, this corresponds to scheduling $B$ units of transactions, or all available transactions if the pool contains fewer than $B$ units. Consequently, the pool congestion increases over time until it reaches a threshold $s^* \geq B$. Once the congestion exceeds $s^*$, the policy schedules more than $B$ units, with the scheduled amount increasing linearly with the congestion level. This reduces the pool congestion, eventually bringing it back below $s^*$, after which the process repeats.

\section{Block Capacity Lower Bounds under Bang--Bang Pricing with Uniform Arrivals}

In this section, we study another special case of the dynamic scheduling problem introduced in Section~\ref{sec:dynamic}, where arrivals are uniform, meaning that at each time step $t$, exactly one transaction of each type arrives.\footnote{Note that although we assume uniform arrivals in this section, most of our derivations extend to the more general case of stationary stochastic arrivals, with the bounds depending on the underlying distributions.} Here, instead of characterizing the structure of the optimal policy, we are interested in obtaining a lower bound on the target block capacity that ensures stability of the system under a simple pricing rule, namely bang-bang pricing policies (Definition~\ref{def:bangbang}).

The motivation for considering such bang-bang policies is inspired by our analysis in Section~\ref{sec:threshold}, which suggests that under reasonably homogeneous transaction dynamics, if the scheduler aims to use a simple threshold-based pricing mechanism to control the mempool, it should set a high price to schedule as few transactions as possible while the pool congestion remains below the threshold. As congestion increases and reaches the threshold, the scheduler must lower prices to their minimum value to schedule more transactions and bring congestion back toward the threshold. Moreover, our analysis in this section is particularly useful from a managerial perspective, where the scheduler seeks simple and interpretable pricing policies to control mempool congestion. In particular, it helps determine a minimum target block capacity that ensures even a simple bang--bang policy can schedule all incoming transactions while keeping the average overshoot cost bounded.

\begin{definition}[Bang--Bang Pricing Mechanism]
\label{def:bangbang}
A \emph{bang--bang pricing mechanism} is a pricing rule under which the posted price at each time step is restricted to the extreme values of the price range. 
Specifically, the mechanism satisfies $p_t \in \{p_{\min},\, p_{\max}\}
\ \forall t \ge 1$.
\end{definition}



Before presenting the main result of this section, we introduce the following preliminaries. For simplicity, we work with the long-run average reward rather than the long-run discounted reward. However, the analysis extends naturally to the discounted setting as well. The long-run average reward of a policy $\pi$, starting from the initial state $\mathbf{S}_0$, is defined as follows:
\begin{align*}
    V^\pi(\mathbf{S}_0)
    &=\mathbb{E}_\pi
    \left[\lim_{T\to \infty}\frac{1}{T}
    \sum_{t=1}^{T} r(\mathbf{S}_t, a_t) \mid\mathbf{S}_0
    \right] \\
    &=
    \sum_{a \in \mathcal{A}} \sum_{\mathbf{S} \in \mathcal{S}} \rho^\pi(\mathbf{S}, a) r(\mathbf{S}, a),
\end{align*}
where
\[
\rho^\pi(\mathbf{S}, a)
=
\lim_{T\to \infty}
\frac{1}{T}
\sum_{t=0}^{T}
\mathbb{P}\{\mathbf{S}^t=\mathbf{S},\, a^t=a\}
\]
denotes the occupancy measure induced by policy \(\pi\), representing the proportion of time that policy \(\pi\) spends at the state-action pair \((\mathbf{S},a)\). Therefore, the expected revenue maximization problem can be formulated as
\begin{align}
\max \quad 
& \sum_{a \in \mathcal{A}} \sum_{\mathbf{S} \in \mathcal{S}} \rho^\pi(\mathbf{S}, a) r(\mathbf{S}, a) \label{reward-maximization}\\
\mathrm{s.t.} \quad 
& \rho^\pi(\mathbf{S}, a) \in \Delta \notag.
\end{align}
where the reward function \(r(\cdot,\cdot)\) is defined in Eq.~\eqref{reward}, and \(\Delta\) denotes the polytope of feasible occupancy measures defined as follows \cite{altman2021constrained}:
    \begin{align}
    \label{eq:Delta-polytope}
    \Delta
    =
    \left\{
    \rho \in [0,1]^{|\mathcal{S}\times \mathcal{A}|} :
     \sum_{\mathbf{S},a} \rho(\mathbf{S},a) = 1,\ \  
     \sum_{a} \rho(\mathbf{S}, a)
    = \sum_{\mathbf{S}', a'} P(\mathbf{S} \mid \mathbf{S}', a') \rho(\mathbf{S}', a') \ \forall \mathbf{S}\in \mathcal{S}
    \right\}.
    \end{align}

It is worth noting that the hard capacity constraint of each block is implicitly incorporated into the feasible polytope $\Delta$ defined in Eq.~\eqref{eq:Delta-polytope}. Specifically, for the prescribed scheduling protocol $\mathcal{F}$, the transition probability in \eqref{eq:prob-tran-matrix} automatically enforces the block capacity constraint, which in turn restricts the set of feasible occupancy measures in \eqref{eq:Delta-polytope}. Thus, no additional capacity constraints are needed in the maximization problem \eqref{reward-maximization}.

\begin{rem}
The primal optimization problem for the dynamic scheduling problem can be interpreted as finding an optimal stationary policy (the primal variables) that maximizes the expected revenue, i.e., the value function (the primal objective). Reformulating this problem in terms of occupancy measures (the dual variables) yields the dual program given in \eqref{reward-maximization}. As shown earlier, the price update rule under the Natural Policy Gradient algorithm—which can be thought as an exponential update rule for the occupancy measures—closely mirrors the dual update rule in the primal-dual analysis of the EIP-1559 mechanism. This correspondence provides a primal-dual extension of the earlier static EIP-1559 analysis to the dynamic setting.     
\end{rem}

\begin{definition}
We define the \emph{average overshoot cost} associated with a pricing mechanism that schedules $Q_t$ transactions on block $t$ as
\begin{align}
\label{eq:overshoot-cost}
    \overline{C}_{\mathrm{over}}
    \coloneqq
    \lim_{T\to\infty}
    \frac{1}{T}\sum_{t=1}^T
    \mathbb{E}\left[c_{\mathrm{over}}(Q_t-B)^+\right].
\end{align}   
\end{definition}

\begin{theorem}
\label{lemma:B-lower-bound}
    Let $Q = \sum_{i=1}^m q_i$ and  $|\mathcal{V}|=n$. Suppose that the target block capacity $B$ is chosen such that 
    \[
    \frac{Q}{4}\left(3+\sqrt{8n^2-16n+9}\right) \leq B,
    \]
    and that there exists a positive integer $k$ such that \footnote{As shown in Remark \ref{rem:capacity}, it is always possible to satisfy both of these conditions simultaneously.}
    \[
    \frac{Q(n-1)}{B - Q} \leq k \leq \frac{2B-Q}{Q(n-1)}.
    \]
    Then there exists a bang--bang pricing mechanism that can schedule all incoming transactions and the average overshoot cost would be upper bounded by
    \[
    \overline{C}_{\mathrm{over}}
    \leq \frac{BQ(n-1)}{2B -Q}.
    \]
\end{theorem}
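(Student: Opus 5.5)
The plan is to construct an explicit periodic bang--bang policy and verify stability and the overshoot bound directly, without any optimization argument. Under uniform arrivals every step brings exactly one transaction of each of the $mn$ types, so each value class $v_j$ contributes total size $Q=\sum_i q_i$ per step and the total arrival per step is $nQ$. Take $p_{\max}=v_n$, so that at the high price only the top value class is eligible, and $p_{\min}\le v_1$, so that at the low price everything is eligible. The candidate policy is a cycle of length $k+1$: post $p_{\max}$ for $k$ consecutive steps, then $p_{\min}$ for one step, and repeat, with $k$ the integer assumed in the statement. Starting from the empty pool $\mathbf{S}_0=\mathbf{0}$, I would show by induction over cycles that the pool is empty at the end of every low-price step. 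The chain is then periodic with period $k+1$, its occupancy measure is uniform over the $k+1$ phases of the cycle, and every arriving transaction is scheduled within at most $k+1$ steps.

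The key steps, in order, are as follows.

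First, the high-price steps. At such a step only the top class is eligible, and the pool holds exactly the size-$Q$ batch of top-class transactions that arrived that step. Since $B\ge \frac{Q}{4}(3+\sqrt{\cdots})>Q$, the whole batch fits, so $Q_t=Q\le B$ and there is no overshoot. The lower-value classes accumulate at rate $(n-1)Q$ per step.

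Second, the low-price step. After $k$ high steps the backlog is $k(n-1)Q$, and the new batch $nQ$ also arrives. I must check that the total fits under the hard cap, i.e.\ $k(n-1)Q+Q+(n-1)Q\le 2B$ or a variant depending on how arrivals are timed relative to scheduling. Under the convention that the pool is observed before arrivals, this is exactly the upper condition $k\le \frac{2B-Q}{Q(n-1)}$. I will fix the timing convention in Eq.~\eqref{eq:state-dynamics} so that this inequality is what is needed. Then the whole pool is cleared, which closes the induction and gives that all transactions are scheduled.

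Third, the overshoot bound. Overshoot occurs only at the low step and equals at most $2B-B=B$ (more precisely $k(n-1)Q+Q-B$). Averaging over the cycle gives $\overline{C}_{\mathrm{over}}\le c_{\mathrm{over}}\frac{B}{k+1}$. To reach the stated bound $\frac{BQ(n-1)}{2B-Q}=B\big/\frac{2B-Q}{Q(n-1)}$, the admissible $k$ should be taken as large as possible. Alternatively, I would rewrite the exact overshoot $k(n-1)Q+Q-B$ divided by $k+1$ and bound it using the lower condition $k(B-Q)\ge Q(n-1)$. I expect this to be the role of the lower bound on $k$: it ensures the averaged overshoot $\frac{k(n-1)Q+Q-B}{k+1}$ does not exceed the claimed expression, possibly with $c_{\mathrm{over}}$ normalized to $1$ as the statement implicitly does.

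Finally, the condition on $B$ is what makes the interval $\big[\frac{Q(n-1)}{B-Q},\frac{2B-Q}{Q(n-1)}\big]$ nonempty. Cross-multiplying gives $Q^2(n-1)^2\le (B-Q)(2B-Q)$, i.e.\ $2B^2-3QB+Q^2(1-(n-1)^2)\ge0$. The positive root of this quadratic is exactly $\frac{Q}{4}\big(3+\sqrt{8n^2-16n+9}\big)$, and this is the verification I would include, deferring integrality to Remark~\ref{rem:capacity}.

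The main obstacle is the third step: matching the exact constant $\frac{BQ(n-1)}{2B-Q}$. This depends on the precise timing of arrivals versus scheduling and on which endpoint of the $k$-interval the algebra exploits. I would first try the crude bound $\frac{B}{k+1}$ together with the upper constraint on $k$, and if that does not land on the stated form, I would compute the exact per-cycle overshoot and optimize it over the admissible $k$.
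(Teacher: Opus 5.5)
Your overall plan matches the paper's: an explicit periodic bang--bang policy, a flush step limited by the hard cap $2B$, averaging the single overshoot over the cycle, and the quadratic in $B$ for nonemptiness of the $k$-interval. Two steps fail as written, however.

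The first is an off-by-one in the cycle length. If a cycle has $L$ blocks, one of them at the low price, then in steady state the low-price block must clear all lower-value arrivals of a whole cycle, $L(n-1)Q$, plus one top-value batch $Q$. So the hard cap needs $\bigl((n-1)L+1\bigr)Q\le 2B$. Your cycle has $L=k+1$, and your own count $k(n-1)Q+Q+(n-1)Q$ is exactly $\bigl((n-1)(k+1)+1\bigr)Q$. No timing convention turns this into the hypothesis $k\le\frac{2B-Q}{Q(n-1)}$. Under Eq.~\eqref{eq:state-dynamics} (schedule from $\mathbf{S}_t$, then add $\mathbf{A}_t$), the first low step from the empty pool does see only $\bigl((n-1)k+1\bigr)Q$. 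But the pool after it is the all-ones matrix $\mathbf{S}^{(1)}$, not empty, so your induction invariant is false, and every later low step sees $\mathbf{S}^{(k+1)}$. Whenever $k+1>\frac{2B-Q}{Q(n-1)}$ (e.g.\ $k=\lfloor\frac{2B-Q}{Q(n-1)}\rfloor$, possibly the only admissible integer), your policy is unstable: each cycle adds at least $\bigl((n-1)(k+1)+1\bigr)Q-2B>0$ to the backlog. The fix is the paper's cycle of length $k$. Post the high price at $\mathbf{S}^{(1)},\dots,\mathbf{S}^{(k-1)}$ and the low price at $\mathbf{S}^{(k)}$. The total size there, $\bigl((n-1)k+1\bigr)Q\le 2B$, is exactly the stated upper condition, and afterwards the state returns to $\mathbf{S}^{(1)}$.

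Second, neither route you propose for the overshoot bound works. Flush feasibility forces $L\le\frac{2B-Q}{Q(n-1)}$, so the crude bound satisfies $B/L\ge\frac{BQ(n-1)}{2B-Q}$. It only looked promising because choosing $k$ ``as large as possible'' with $L=k+1$ pushes $L$ above $\frac{2B-Q}{Q(n-1)}$, which is precisely when the flush fails. The lower condition $k(B-Q)\ge Q(n-1)$ cannot help either. High-price blocks carry at most $Q<B$, as you correctly derive from the hypothesis on $B$, so the exact per-cycle average is
\[
\overline{C}_{\mathrm{over}}=\frac{\bigl((n-1)L+1\bigr)Q-B}{L}=(n-1)Q-\frac{B-Q}{L}.
\]
This is increasing in $L$, so a lower bound on $L$ only bounds it from below. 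The missing step is to combine this monotonicity with the upper bound $L\le\frac{2B-Q}{Q(n-1)}$, which gives exactly $\frac{BQ(n-1)}{2B-Q}$. In fact the target inequality is algebraically equivalent to that upper bound on $L$, and it holds trivially if the flush block does not exceed $B$. The lower condition on $k$ is used in neither conclusion. In the paper it comes from a separate requirement: that the average eligible arrival volume $\frac{(n+k-1)Q}{k}$ not exceed $B$.
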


\begin{proof}
    Let $\pi$ denote a stationary deterministic policy. For each state $\mathbf{S} \in \mathcal{S}$ and iteration $t = 1,2, \dots$, let $x^t_{\mathbf{S}}$ be an indicator variable that equals $1$ if the system is in state $\mathbf{S}$ at time step $t$, and $0$ otherwise. Assuming a stationary arrival process where $\mathbf{A}_t = \mathbf{A}$ for all $t=1,2,\dots$, the average eligible arrival rate to the system is given by:
    \[
    \sum_{\mathbf{S} \in \mathcal{S}} \nu(\mathbf{S})\left(\sum_{i=1}^m q_i \sum_{j\sim \pi(\cdot|\mathbf{S})}^n \mathbf{A}[i,j]\right).
    \]
    where $\nu(\mathbf{S}) := \lim_{T\to \infty}\frac{1}{T} \sum_{t=1}^T x_{\mathbf{S}}^t$ denotes the state visitation frequency. The block capacity $B$ should therefore satisfy
    \[
    \sum_{\mathbf{S} \in \mathcal{S}} \nu(\mathbf{S})\left(\sum_{i=1}^m q_i \sum_{j\sim \pi(\cdot|\mathbf{S})}^n \mathbf{A}[i,j]\right) \leq B,
    \]
    while also making sure that the systems remains stable. For the system to remain stable, the average rate of scheduled transactions must equal $Q^t_{\mathrm{arrival}}$, i.e.,
    \[
    \sum_{s \in \mathcal{S}}\nu(\mathbf{S}) \left(\sum_{i=1}^m q_i \sum_{j\sim \pi(\cdot|\mathbf{S})}^n \mathbf{S}[i,j]\right) = Q^t_{\mathrm{arrival}}.
    \]
    Using the definition of the occupancy measure $\rho(\mathbf{S}, a)$ as the average proportion of time the system encounters the state-action pair $(\mathbf{S}, a)$ under  policy $\pi$ \cite{altman2021constrained,qin2023scalable}, i.e., 
    \begin{align}
        \rho(\mathbf{S}, a) &=\lim_{T\to \infty}\frac{1}{T}\sum_{t=0}^{T} \mathbb{P}(\mathbf{S}^t = \mathbf{S}, a^t = a)\nonumber \\& = \nu(\mathbf{S})\pi(a|\mathbf{S})\label{eq:occupancy-measure-define},
    \end{align}
    we can formulate an optimization problem to determine the minimum block capacity required to accommodate all incoming transactions as below: 
    \begin{align}
    \label{initialObjective}
    \min \quad 
    & \sum_{a \in \mathcal{A}}\sum_{\mathbf{S} \in \mathcal{S}} \rho(\mathbf{S}, a)\left(\sum_{i=1}^m q_i \sum_{j=a}^n \mathbf{A}[i,j]\right)  \\
    \mathrm{s.t.} \quad 
    & \rho(\mathbf{S},a) \in \Delta,\\
    & \sum_{a \in \mathcal{A}}\sum_{\mathbf{S} \in \mathcal{S}} \rho(\mathbf{S}, a)\left(\sum_{i=1}^m q_i \sum_{j=a}^n \mathbf{S}[i,j]\right) = Q^t_{\mathrm{arrival}}.
    \end{align}
    
    To further simplify the analysis, we assume that exactly one transaction of each type $(q_i, v_j)$ arrives at every time step. Under this assumption, the above optimization problem reduces to:
    \begin{align*}
    \label{initialObjective}
    \min \quad 
    & \sum_{a \in \mathcal{A}}\sum_{\mathbf{S} \in \mathcal{S}} \rho(\mathbf{S}, a) \cdot \left(n - a + 1\right) \cdot \sum_{i=1}^m q_i  \\
    \mathrm{s.t.} \quad 
    & \rho(\mathbf{S},a) \in \Delta,\\
    & \sum_{a \in \mathcal{A}}\sum_{\mathbf{S} \in \mathcal{S}} \rho(\mathbf{S}, a)\left(\sum_{i=1}^m q_i \sum_{j=a}^n \mathbf{S}[i,j]\right) = Q^t_{\mathrm{arrival}}.
    \end{align*}
    This objective function can be further simplified to:
    \begin{align*}
    \max \quad 
    & \sum_{a \in \mathcal{A}}\sum_{\mathbf{S} \in \mathcal{S}} a \cdot \rho(\mathbf{S}, a) \\
    \mathrm{s.t.} \quad 
    & \rho(\mathbf{S},a) \in \Delta,\\
    & \sum_{a \in \mathcal{A}}\sum_{\mathbf{S} \in \mathcal{S}} \rho(\mathbf{S}, a)\left(\sum_{i=1}^m q_i \sum_{j=a}^n \mathbf{S}[i,j]\right) = Q^t_{\mathrm{arrival}}.
    \end{align*}
    
Next, to obtain a lower bound for $B$, it is enough to construct a feasible solution following bang--bang mechanism to this optimization problem, whose objective value will serve as a lower bound for $B$. We do so by constructing a stationary deterministic policy $\pi$ and then defining a corresponding occupancy measure $\rho\in\Delta$ induced by $\pi$. Fix an integer $k\ge 1$ and define the following state
\[
\mathbf{S}^{(k)} :=
\begin{pmatrix}
k & k & \cdots & 1\\
k & k & \cdots & 1\\
\vdots & \vdots & \ddots & \vdots\\
k & k & \cdots & 1
\end{pmatrix},
\]
where $\mathbf{S}^{(0)}$ is the state of all zeros.
We consider the stationary deterministic policy $\pi$ following the bang--bang mechanism given by
\begin{equation}
\label{eq:policy-pi-k}
    \pi(\mathbf S) :=
    \begin{cases}
    1, & \text{if } \mathbf S = \mathbf S^{(k)},\\
    n, & \text{otherwise.}
    \end{cases}
\end{equation}

That is, the policy chooses the highest price $a=n$ in all states except when the pool reaches $\mathbf S^{(k)}$, at which point it posts the lowest price $a=1$ to make the pool less congested.

Under our deterministic arrival assumption, each step adds a total size 
$Q := \sum_{i=1}^m q_i$ to each value column.
When $\pi(\mathbf S) = n$, only the top-value column is eligible; consequently, the first $n-1$ columns increase by $+1$ in every entry at each time step.
Starting from the all-zero state, after $k$ consecutive steps of playing $a = n$, the state becomes exactly $\mathbf S^{(k)}$.
At that point, the policy plays $a = 1$, making all columns eligible.

The total pool size at $\mathbf S^{(k)}$ equals
\[
\sum_{i=1}^m q_i\bigl((n-1)k+1\bigr) \;=\; \bigl((n-1)k+1\bigr)Q.
\]
Thus, if
\begin{equation}
\label{eq:flush-B}
\bigl((n-1)k+1\bigr)Q \le 2B,
\end{equation}
on the flush step, the scheduler can clear the entire pool. Hence, the system returns to the $\mathbf S^{(1)}$ state and the cycle repeats, implying that all arriving transactions are eventually scheduled.

Now fix the above stationary policy $\pi$ and consider the induced MDP with transition kernel
$P(\mathbf S'\mid \mathbf S,\pi(\mathbf S))$. Note that by choosing the target block capacity sufficiently large, as in Eq.~\eqref{eq:flush-B}, the state evolution becomes periodic over a finite set of states, with each cycle closing by returning to the $\mathbf S^{(1)}$ state. This periodic behavior is illustrated in Fig.~\ref{fig:state-dynamics}.
\begin{figure}[htb]
    \centering
    \begin{tikzpicture}[
        state/.style={draw, circle, minimum size=0.9cm, align=center},
        arrow/.style={->, thick}
    ]
    \node[state] (S0) {$\mathbf S^{(1)}$};
    \node[state] (S1) [right=2.2cm of S0] {$\mathbf S^{(2)}$};
    \node[draw=none] (Sdots) [right=2.2cm of S1] {$\cdots$};
    \node[state] (Sk) [right=2.2cm of Sdots] {$\mathbf S^{(k)}$};

    \draw[arrow] (S0) -- node[above] {$a=n$} (S1);
    \draw[arrow] (S1) -- (Sdots);
    \draw[arrow] (Sdots) -- node[above] {$a=n$} (Sk);

    \draw[arrow, out=20, in=160, looseness=0.8]
    (Sk.north) to node[above] {$a=1$} (S0.north);

    \end{tikzpicture}
    \caption{Periodic state evolution induced by the policy defined in Eq.~\eqref{eq:policy-pi-k}:
    $k$ steps with $a=n$ followed by one flush step with $a=1$.}
    \label{fig:state-dynamics}
\end{figure}

A direct result of the periodic state evolution described above is that  the induced Markov chain visits each state in the cycle exactly once per period. Hence for $\mathbf S\in\{\mathbf S^{(0)},\mathbf S^{(1)},\ldots,\mathbf S^{(k)}\}$, the induced state distribution can be written as  
\[
\nu(\mathbf S)=\frac{1}{k},
\]
and $\nu(\mathbf S)=0$ for all other $\mathbf S\in\mathcal{S}$ not belonging to this cycle. Therefore, the occupancy measure defined in \eqref{eq:occupancy-measure-define} corresponding to this policy is given by
\[
\rho\bigl(\mathbf S^{(\ell)}, n\bigr) = \frac{1}{k}
\quad \text{for } \ell = 1,\ldots,k-1,
\qquad
\rho\bigl(\mathbf S^{(k)}, 1\bigr) = \frac{1}{k},
\]
and $\rho(\mathbf S, a) = 0$ otherwise. We can then verify that $\rho(\mathbf S, a) \in \Delta$.

    Substituting the policy defined in Eq.~\eqref{eq:policy-pi-k} into the initial objective in Eq.~\eqref{initialObjective}, we get
    \[
    \frac{n+k-1}{k} \cdot Q \leq B,
    \]
    which then combined with inequality Eq.~\eqref{eq:flush-B} gives us
    \begin{align}
    \label{k-range}
        \frac{Q(n-1)}{B - Q} \leq k \leq \frac{2B-Q}{Q(n-1)},
    \end{align}
    where $Q=\sum_{i=1}^m q_i$.
    Note that the above statement should hold while making sure that $k$ remains an integer. Therefore we need to find the smallest $B$ such that 
    \[
    \frac{Q(n-1)}{B - Q} \leq \frac{2B-Q}{Q(n-1)},
    \]
    which further simplifies to
    \[
    0 \leq 2B^2 -3BQ+2nQ^2-Q^2n^2,
    \]
    and finally
    \begin{align}
    \label{eq:B-lower-bound}
    \frac{nQ}{2}<
        \frac{Q}{4}\left(3+\sqrt{8n^2-16n+9}\right) \leq B.
    \end{align}
     
    Thus, for all $B$ satisfying the above inequality, it is sufficient to find the smallest value of $B$ for which the range shown in Eq.~\eqref{k-range} includes a positive integer $k$.

    Now that we have established a lower bound on the target block capacity, we want to derive an upper bound on the \emph{average overshoot cost} defined in Eq.~\eqref{eq:overshoot-cost}. Over a cycle of $k$ steps, exactly $k-1$ blocks are scheduled without incurring any overshoot cost. Consequently, the average overshoot cost over the cycle satisfies
    \[
    \overline{C}_{\mathrm{over}} = \frac{((n-1)k+1)Q - B}{k}.
    \]
    Using the result of Remark~\ref{B-Q-range}, we know $\frac{nQ}{2}\leq B< nQ$, and since $n\geq2$, we can conclude that $Q\leq B$. Therefore, we can conclude that $\overline{C}_{\mathrm{over}}$ is an increasing function of $k$. By substituting the upper bound on $k$ in Eq.~\eqref{k-range} we get
    \[
    \overline{C}_{\mathrm{over}}
    \leq \frac{BQ(n-1)}{2B -Q}.
    \]
\end{proof}

\begin{rem}\label{rem:capacity}
  Note that a positive integer $k$ satisfying \eqref{k-range} can always be found by increasing the block capacity $B$. In particular, if $B$ is increased by an amount $x$, the upper bound in \eqref{k-range} increases by $\frac{2x}{Q(n-1)}$, while the lower bound decreases by
\[
\frac{x Q (n-1)}{(B - Q)(B + x - Q)}.
\]
As a result, the feasible interval for $k$ expands as $B$ increases, making it always possible to find a positive integer $k$ within this range by choosing $B$ sufficiently large.
\end{rem}

\section{Conclusions}
Ethereum’s EIP-1559 mechanism regulates block utilization through a simple price update rule. However, existing analyses treat this pricing problem in a static setting and do not explicitly model the evolving composition of the mempool. In this paper, we study a dynamic transaction scheduling problem with heterogeneous transaction sizes and per-unit values, where users are \emph{patient} and unscheduled transactions remain in the mempool and can be scheduled in future blocks.

We first provide a primal--dual interpretation of the static EIP-1559 mechanism, showing that the block price naturally arises as a dual variable of a social-welfare maximization program. We then leverage this view to analyze the mechanism via a primal--dual competitive-ratio argument. Next, we extend this framework to a dynamic setting by formulating the scheduling problem as a discounted MDP whose state captures the mempool configuration and whose actions correspond to prices. We further incorporate holding costs and overshoot penalties into a long-run discounted objective and apply an episodic Natural Policy Gradient (NPG) algorithm to learn an optimal dynamic pricing policy. Our experiments in both deterministic and stochastic arrival regimes confirm that dynamic pricing can stabilize the transaction pool while maximizing discounted reward. In particular, as the overshoot penalty increases, the average scheduled transaction volume converges to the target block capacity $B$, and the resulting NPG updates closely resemble the EIP-1559 exponential price update rule observed in the static setting. Finally, we examine two special cases of the MDP formulation: one with homogeneous transactions and one with uniform arrivals. In the homogeneous setting, where the protocol directly controls the scheduled transaction volume, we show that the optimal policy has a threshold structure. Building on this insight and addressing block overflow, we then derive a theoretical lower bound on the block capacity $B$ required to ensure system stability under a bang--bang pricing mechanism with uniform arrivals.

Our work opens an avenue to several intriguing research questions. One direction is to extend our results to a partially observable setting. Another is to characterize the structure of optimal scheduling policies for a broader class of arrival processes beyond homogeneous or uniform arrivals. Finally, extending our results to scenarios where the scheduler is further constrained by the environment is an important direction. For instance, one may ask how to design optimal policies when transactions are strategic and the scheduler must design a mechanism to elicit truthful information about users’ transaction sizes and values in a dynamic manner \cite{leon2025online}. Another extension is to consider heterogeneous blocks with varying features, which impose additional constraints on the types of transactions that can be scheduled. In such constrained dynamic environments, we believe that our primal--dual analysis based on occupancy measures sets the stage for incorporating additional constraints and future developments.

\bibliographystyle{ACM-Reference-Format}
\bibliography{sample-bibliography}

\newpage
\appendix
\section*{Appendix}

\section*{Notations}
The main notations used in this paper are listed in Table~\ref{tab:notations}.
\begin{table}[H]
\centering
\small  
\setlength{\tabcolsep}{4pt}
\scalebox{0.9}{
\begin{tabular}{|c|c|}
\hline
\textbf{Notation} & \textbf{Definitions} \\
\hline
$\mathcal{Q}=\{q_1, \cdots, q_m\}$ & set of transaction sizes\\
$\mathcal{V}=\{v_1, \cdots, v_n\}$ & set of transaction per unit values\\
$m$ & number of different transaction sizes \\
$n$ & number of different per unit values \\
$B$ & target block size \\
$c \cdot B$ & maximum block size\\
$r_j$ & arrival time of transaction $j$ \\
$x_{jt}$ & fraction of transaction $j$ scheduled on block $t$ \\
$\mathcal{C}$ & set of all transactions \\
$a_t$ & action at time $t$ \\
$\mathbf{S}_t$ & state at time $t$ \\
$\mathbf{S}_0$ & initial state \\
$\mathcal{A}$ & action space \\
$\mathcal{S}$ & state space \\
$\mathcal{F}$ & scheduling protocol \\ 
$\rho^\pi(\cdot, \cdot)$ & occupancy measure induced by policy $\pi$ \\
\hline
\end{tabular}}
\caption{List of notations.}
\label{tab:notations}
\end{table}

\section{An Alternative Primal-Dual Formulation for EIP-1559}
\label{app}

In this appendix, we provide a more detailed primal--dual formulation of EIP-1559, which yields an interpretable price update rule in terms of the corresponding dual variables.

Let $Q_t \coloneqq \sum_{j \in \mathcal{C}} q_j x_{jt}$ denote the total scheduled size in block $t$. We aim to incorporate a one-sided penalty $(Q_t - B)_+$ into our main objective, which is the social welfare:
\[
\text{Social Welfare} = \sum_t \left[ \sum_{j \in \mathcal{C}} v_j x_{jt} - c_{\mathrm{over}} (Q_t - B)_+ \right],
\]
where $c_{\mathrm{over}}$ is the penalty coefficient for exceeding the target block size $B$.
 
\[
\mathrm{SW}:= \sum_{j\in\mathcal C}\sum_{t=1}^T q_j v_j\,x_{jt}.
\]
To model the one-sided penalty $(Q_t-B)_+$, we introduce auxiliary variables $y_t\ge 0$ such that
$y_t \ge Q_t - B$ and maximize welfare minus $\sum_t y_t$:
\begin{align}
\max \quad 
& \sum_{j\in\mathcal C}\sum_{t=1}^T q_j v_j\,x_{jt}\;-\;\sum_{t=1}^T y_t \label{eq:P-onesided}\\
\text{s.t.}\quad
& \sum_{t=r_j}^T x_{jt} \le 1 \qquad\qquad\qquad\ \ \forall j\in\mathcal C, \notag\\
& \sum_{j\in\mathcal C} q_j x_{jt} \le 2B \qquad\qquad\qquad\ \ \forall t\in\{1,\dots,T\}, \notag\\
& \sum_{j\in\mathcal C} q_j x_{jt} - y_t \le B \qquad\qquad\ \ \forall t\in\{1,\dots,T\}, \notag\\
& y_t \ge 0 \qquad\qquad\qquad\qquad\qquad\ \ \forall t\in\{1,\dots,T\}, \notag\\
& x_{jt}\ge 0 \qquad\qquad\qquad\qquad\qquad\ \ \forall j\in\mathcal C,\ \forall t\ge r_j. \notag
\end{align}

Then the dual can be written as
\begin{align}
\min \quad 
& \sum_{j\in\mathcal C}\alpha_j + \sum_{t=1}^T \beta_t + \sum_{t=1}^T \gamma_t \label{eq:D-onesided}\\
\text{s.t.}\quad
& \alpha_j \;+\; q_j\left(\frac{\beta_t}{2B}+\frac{\gamma_t}{B}\right)
\ \ge\ q_j v_j
\qquad \forall j\in\mathcal C,\ \forall t\ge r_j, \notag\\
& \alpha_j\ge 0 \quad \forall j,\qquad
\beta_t \ge 0 \quad \forall t,\qquad
0\le \gamma_t \le B \quad \forall t. \notag
\end{align}

We now provide an interpretation of the dual variables. Let $\beta_t$ denote the shadow price associated with the hard capacity constraint of block $t$, i.e., the marginal cost of utilizing the full block capacity $2B$. Accordingly, scheduling $q_j$ units of transaction $j$ in block $t$ incurs a capacity cost of
\[
\frac{\beta_t}{2B} \, q_j.
\]

The dual variable $\gamma_t$ represents the penalty associated with exceeding the target block size $B$ in block $t$. Specifically, $\gamma_t$ is activated when the total scheduled transaction size on block $t$, i.e., $Q_t$, exceeds $B$, and it captures the cost of allocating up to an additional block-sized amount of capacity beyond the target. Equivalently, when the block size exceeds $B$, scheduling $q_j$ additional units of transaction $j$ incurs an extra charge of
\[
\frac{\gamma_t}{B} \, q_j.
\]
We can also interpret this as follows: if $Q_t > B$, then the share of the overshoot penalty allocated to transaction $j$ is proportional to its size, i.e., $q_j$ multiplied by the marginal cost $\frac{\gamma_t}{B}$.

Together, the effective per-unit price faced by a transaction scheduled on block $t$ is
\[
p_t = \frac{\beta_t}{2B} + \frac{\gamma_t}{B},
\]
which reflects both the scarcity of total block capacity and the congestion cost of exceeding the target block size.

Transaction $j$ has total value $v_j q_j$. Thus, if it is scheduled on block $t$, its utility is
\[
u_{j,t} = q_j v_j - \frac{\beta_t}{2B} q_j - \frac{\gamma_t}{B} q_j.
\]

Define
\[
\alpha_j \coloneqq \max_{t \ge r_j} \left\{ q_j v_j - \frac{\beta_t}{2B} q_j - \frac{\gamma_t}{B} q_j \right\},
\]
which represents the maximum utility that transaction $j$ can achieve. Since transactions never accept negative utility, transaction $j$ is scheduled only on blocks whose effective price does not exceed its value, i.e.,
\[
p_t = \frac{\beta_t}{2B} + \frac{\gamma_t}{B} \le v_j.
\]

Therefore, the problem of choosing the allocation variables $x_{jt}$ reduces to the problem of setting an appropriate price for each block $t$. As a result, only transactions with $v_j \ge p_t$ can be scheduled in block $t$.

\section{Auxiliary Lemmas and Omitted Proofs}

\subsection{Auxiliary Lemmas}
\begin{lemma}
\label{convexity_min_jointly-convex}
    Let $h:\mathbb{R}^2 \to \mathbb{R}$ be a jointly convex function, and define $g(x) := \inf_{f\in \mathbb{R}} h(x,f)$. Then, $g$ is a convex function on $\mathbb{R}$. 
\end{lemma}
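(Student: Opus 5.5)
The plan is to work directly from the epigraph characterization of convexity. Take two points $x_1,x_2\in\mathbb{R}$ and $\lambda\in[0,1]$, and fix $\epsilon>0$. By definition of the infimum, pick $f_1,f_2\in\mathbb{R}$ with
\[
h(x_1,f_1)\le g(x_1)+\epsilon, \qquad h(x_2,f_2)\le g(x_2)+\epsilon .
\]
If $g(x_i)=-\infty$ for some $i$, replace the right-hand side by an arbitrary real number $-M$. The convention $g:\mathbb{R}\to[-\infty,\infty)$ should be noted, since the infimum need not be finite in general.

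Next, take the convex combination $(x_\lambda,f_\lambda):=\lambda(x_1,f_1)+(1-\lambda)(x_2,f_2)$. Joint convexity of $h$ gives
\[
g(x_\lambda)\le h(x_\lambda,f_\lambda)\le \lambda h(x_1,f_1)+(1-\lambda)h(x_2,f_2)\le \lambda g(x_1)+(1-\lambda)g(x_2)+\epsilon .
\]
Letting $\epsilon\downarrow 0$ yields the convexity inequality. In the case where $g(x_i)=-\infty$ for some $i$, letting $M\to\infty$ forces $g(x_\lambda)=-\infty$, so the inequality holds trivially.

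The only delicate point is the one just described: the infimum may not be attained, which is why the $\epsilon$-argument is used rather than exact minimizers. The possibility $g=-\infty$ is handled by the remark above. In the paper's application to Lemma~\ref{J-k-convex}, the constraint $f\in[-B,\min\{s,2B\}-B]$ can be absorbed into $h$ by adding the convex indicator of the jointly convex set $\{(s,f): -B\le f\le \min\{s,2B\}-B\}$. That set is convex because $\min\{s,2B\}$ is concave. With the compact $f$-range and continuous costs, the infimum is then a finite minimum, but the lemma itself does not need this.
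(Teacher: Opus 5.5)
Your proposal is correct and follows essentially the same $\epsilon$-approximate-minimizer argument as the paper. You pick near-optimal $f_1,f_2$, apply joint convexity at $(x_\lambda,f_\lambda)$, and let $\epsilon\downarrow 0$; your extra handling of $g(x_i)=-\infty$, which the paper tacitly excludes, and your indicator-function remark, which mirrors the paper's Lemma~\ref{convexity_over_convex_set}, are harmless refinements.
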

\begin{proof}
Let $g(x_1) = \inf_{f\in \mathbb{R}} h(x_1,f)$ and $g(x_2) = \inf_{f\in \mathbb{R}} h(x_2,f)$. For every $\epsilon>0$, there exists $f_1, f_2\in \mathbb{R}$ such that $g(x_1) + \epsilon \geq h(x_1, f_1)$ and $g(x_2) + \epsilon \geq h(x_2, f_2)$.
    For $\lambda \in [0,1]$, we have
    \begin{align*}
        \lambda g(x_1) + (1-\lambda)g(x_2) &= 
        \lambda (g(x_1) + \epsilon) + (1-\lambda)(g(x_2) + \epsilon) - \epsilon \\& \geq
        \lambda h(x_1, f_1) + (1-\lambda) h(x_2, f_2) - \epsilon
        \\& \geq 
        h(\lambda x_1 + (1-\lambda)x_2, \lambda f_1 + (1-\lambda)f_2) - \epsilon
        \\& \geq g(\lambda x_1 + (1-\lambda)x_2) - \epsilon.
    \end{align*}
    Since this relation holds for every $\epsilon>0$, we conclude that 
    \begin{align*}
        \lambda g(x_1) + (1-\lambda)g(x_2) \geq g(\lambda x_1 + (1-\lambda)x_2),
    \end{align*}
    and therefore $g(\cdot)$ is convex on $\mathbb{R}$.
\end{proof}

\begin{lemma} 
\label{convexity_over_convex_set}
Let $Q:\mathbb{R}^2 \to \mathbb{R}$ be jointly convex and continuous, and let $\mathcal{H} \subseteq \mathbb{R}^2$ be a nonempty convex set. For each $x \in \mathbb{R}$, define $\mathcal{H}_x := \{f \in \mathbb{R} : (x,f) \in \mathcal{H}\}$, and suppose that $\mathcal{H}_x$ is compact. Define $g(x) := \min_{f \in \mathcal{H}_x} Q(x,f)$. Then $g$ is convex on the set $\{x \in \mathbb{R} : \exists\, f \in \mathbb{R} \text{ such that } (x,f) \in \mathcal{H}\}$. 
\end{lemma}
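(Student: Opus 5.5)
The plan is to reduce the statement to the unconstrained case already handled in Lemma~\ref{convexity_min_jointly-convex}, but to argue directly so that the constraint set $\mathcal{H}$ is respected.

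\textbf{Step 1: $g$ is well defined.} Let $\mathcal{D}:=\{x\in\mathbb{R}:\exists f,\ (x,f)\in\mathcal{H}\}$ denote the projection of $\mathcal{H}$. For $x\in\mathcal{D}$ the fiber $\mathcal{H}_x$ is nonempty and, by hypothesis, compact. Since $Q(x,\cdot)$ is continuous, the minimum over $\mathcal{H}_x$ is attained. Hence $g(x)$ is finite and there is a minimizer $f_x\in\mathcal{H}_x$. Because $\mathcal{D}$ is the linear projection of a convex set, it is convex, so convexity of $g$ on $\mathcal{D}$ is a meaningful statement.

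\textbf{Step 2: the convexity inequality.} Fix $x_1,x_2\in\mathcal{D}$ and $\lambda\in[0,1]$, and choose minimizers $f_1\in\mathcal{H}_{x_1}$ and $f_2\in\mathcal{H}_{x_2}$, so that $g(x_i)=Q(x_i,f_i)$. Put $x_\lambda=\lambda x_1+(1-\lambda)x_2$ and $f_\lambda=\lambda f_1+(1-\lambda)f_2$. The pairs $(x_1,f_1)$ and $(x_2,f_2)$ both lie in $\mathcal{H}$, and $\mathcal{H}$ is convex, so $(x_\lambda,f_\lambda)\in\mathcal{H}$. Equivalently, $f_\lambda\in\mathcal{H}_{x_\lambda}$, which means $f_\lambda$ is a feasible candidate in the minimization that defines $g(x_\lambda)$. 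Therefore
\[
g(x_\lambda)\le Q(x_\lambda,f_\lambda)\le \lambda Q(x_1,f_1)+(1-\lambda)Q(x_2,f_2)=\lambda g(x_1)+(1-\lambda)g(x_2),
\]
where the second inequality is joint convexity of $Q$.

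\textbf{Main point of care.} The only delicate step is the feasibility of the convex combination: we need $f_\lambda\in\mathcal{H}_{x_\lambda}$, and this follows exactly from convexity of $\mathcal{H}$ in $\mathbb{R}^2$, not merely convexity of each fiber. Compactness and continuity are used only to attain the minima. Without attainment one could still run the $\epsilon$-argument of Lemma~\ref{convexity_min_jointly-convex}, replacing the minimizers by $\epsilon$-minimizers, and reach the same conclusion.
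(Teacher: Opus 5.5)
Your proof is correct. It rests on the same core idea as the paper, but the paper packages the constraint differently.

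The paper introduces the indicator $h(x,f)$, equal to $0$ on $\mathcal{H}$ and $\infty$ off it. It notes that $h$ is jointly convex precisely because $\mathcal{H}$ is convex, and writes $g(x)=\min_{f\in\mathbb{R}}\{Q(x,f)+h(x,f)\}$. It then invokes Lemma~\ref{convexity_min_jointly-convex} for the jointly convex function $Q+h$. The proof of that lemma is the $\epsilon$-minimizer version of your Step~2. You instead inline the argument: compactness and continuity give exact minimizers, and you check feasibility of $(x_\lambda,f_\lambda)$ directly.

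The paper's route is modular and uses the standard convex-analysis idiom of encoding constraints as indicator functions. As written, though, it is slightly loose. Lemma~\ref{convexity_min_jointly-convex} is stated for real-valued functions on $\mathbb{R}^2$, while $Q+h$ takes the value $+\infty$. Strictly, one therefore needs the extended-valued version of that lemma, restricted to points where the infimum is finite; its proof does support this.

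Your direct argument avoids extended-valued arithmetic altogether. It also makes explicit that the proof uses convexity of $\mathcal{H}$ itself, not merely of each fiber $\mathcal{H}_x$. Finally, it records that the domain $\mathcal{D}$ is convex, which the paper leaves implicit.
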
    
\begin{proof}
Since $Q$ is jointly convex, $Q(x,\cdot)$ must be convex for each fixed $x$. Also, since $\mathcal{H}_x$ is assumed to be compact, $\min_{f\in \mathcal{H}_x} Q(x,f)$ is well-defined. Define $h: \mathbb{R}^2 \to \mathbb{R}$ as 
    \begin{align*}
        h(x,f)=\begin{cases} 
             0 & (x,f)\in\mathcal{H}, \\[10pt]
            \infty & \text{else},
        \end{cases}
    \end{align*}
    where we note that $h: \mathbb{R}^2 \to \mathbb{R}$ is jointly convex. This is because for $\lambda\in[0,1]$, we have
    \begin{align*}
        \lambda h(x_1, f_1) + (1-\lambda)h(x_2, f_2) \geq h(\lambda x_1 + (1-\lambda)x_2, \lambda f_1+(1-\lambda)f_2),
    \end{align*}
    which holds if $(x_1, f_1), (x_2, f_2)\in \mathcal{H}$ because of convexity of $\mathcal{H}$. On the other hand, if one of the points is not in $\mathcal{H}$, for example $(x_1, f_1)\notin \mathcal{H}$ or $(x_2, f_2)\notin \mathcal{H}$, then the left-hand side would be infinity and the inequality would hold trivially.
    
    We can therefore rewrite $g(\cdot)$ as $g(x) = \min_{f \in \mathbb{R}} \{Q(x,f)+h(x,f)\}$, which is well defined since we assume that $\mathcal{H}$ is nonempty. Now since both $h$ and $Q$ are jointly convex, their sum is also jointly convex and according to Lemma~\ref{convexity_min_jointly-convex}, $g$ is convex on its domain.
\end{proof}

\subsection{Proof of Lemma~\ref{J-k-convex}}
\label{proof-J-k-convex}

We first use induction to show that $J_k$ is convex for all $k$. The base case holds since $J_0(x) = c_{\mathrm{term}}x$, which is convex. Now suppose that $J_k(\cdot)$ is convex. Let
\begin{equation}
    J_{k+1}(x) = \min_{f\in\mathcal{H}_x}\{Q_k(x,f)\}, 
\end{equation}
where $Q_k(x,f) := c_{\mathrm{hold}} x + c_{\mathrm{over}} (f)^{+}+\gamma \mathbb{E}[J_{k}(x+A-f-B)]$, and $\mathcal{H}_x = \{f\in \mathbb{R}: (x,f)\in\mathcal{H}\}$ with $\mathcal{H} = \{(x,f)\in [0, \infty]\times [-B, \infty):  f\leq \min\{x, 2B\}-B\}$. We have $(0,-B)\in \mathcal{H}$, hence $\mathcal{H}$ is nonempty. The feasible region $\mathcal{H}$ is clearly closed and convex, but not necessarily bounded. However, for each fixed $x$, the set $\mathcal{H}_x$ is both closed and bounded (see Fig.~\ref{fig:areaH} for $B=5$).
\begin{figure}[htbp]
    \centering
\includegraphics[width=0.45\textwidth]{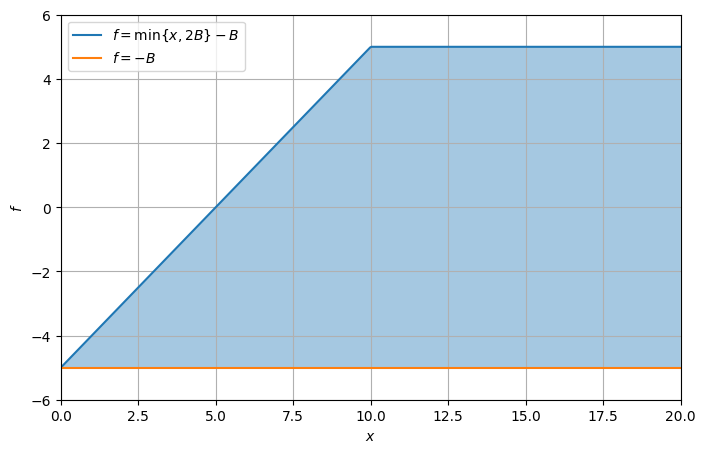}
\vspace{-0.3cm}
    \caption{Illustration of the feasible region $\mathcal{H}$ for $B=5$.}
    \label{fig:areaH}
\end{figure}

Next, we show that $Q_k(\cdot,\cdot)$ is jointly convex. For any $\lambda \in [0,1]$, it suffices to show that
\begin{equation}\nonumber
    \lambda Q_k(x_1, f_1) + (1-\lambda)Q_k(x_2, f_2) \geq Q_k(\lambda x_1 + (1-\lambda)x_2, \lambda f_1 + (1-\lambda)f_2).
\end{equation}
We can write
\begin{align*}
    \lambda Q_k(x_1, f_1) + (1-\lambda)Q_k(x_2, f_2) &= \lambda \left(c_{\mathrm{hold}} x_1 + c_{\mathrm{over}} (f_1)^{+} + \gamma\mathbb{E}[J_{k}(x_1+A-f_1-B)]\right) \\&\qquad+ (1-\lambda) \left(c_{\mathrm{hold}} x_2 + c_{\mathrm{over}} (f_2)^{+} + \gamma\mathbb{E}[J_{k}(x_2+A-f_2-B)]\right) \\&= c_{\mathrm{hold}} \left(\lambda x_1 + (1-\lambda)x_2\right) + c_{\mathrm{over}} \left(\lambda (f_1)^{+} + (1-\lambda) (f_2)^{+}\right) \\&\qquad+ \gamma \mathbb{E}[\lambda J_{k}(x_1+A-f_1-B) + (1-\lambda) J_{k}(x_2+A-f_2-B)] \\& \geq 
    c_{\mathrm{hold}} \left(\lambda x_1 + (1-\lambda)x_2\right) + c_{\mathrm{over}} (\lambda f_1 + (1-\lambda)f_2)^{+} \\&\qquad+ \gamma \mathbb{E} [J_{k}(\lambda x_1 + (1-\lambda)x_2-\lambda f_1 - (1-\lambda)f_2 + A - B)].
\end{align*}
In the above relations, we have used the fact that $(\cdot)^+$ is convex, and that $J_k(x + A - f - B)$ is the composition of a convex function (by the induction hypothesis) with an affine function of $(x,f)$, and is therefore convex. Moreover, taking expectations preserves convexity.

Therefore, we have shown that $Q_k(\cdot,\cdot)$ is jointly convex, which, in view of Lemma~\ref{convexity_over_convex_set}, implies that $J_{k+1}$ is convex, completing the induction. Now, since we have shown that $J_k(\cdot)$ is convex for all $k = 1,2,\ldots$, and the limit $\lim_{k \to \infty} J_k$ exists, the limit function $J^* := \lim_{k \to \infty} J_k$ is also convex.

Finally, using the convexity of $J^*$ and a similar argument as above, one can show that $Q^*(s,f)$ is jointly convex, where
\[
J^*(s) = \min_{-B \leq f \leq \min\{s,2B\} - B} Q^*(s,f).
\]
Consequently, for any fixed $s$, the function $Q^*(s,\cdot)$ is convex in $f$.

\subsection{Proof of Lemma~\ref{J-non-decreasing}}
\label{proof-J-non-decreasing}
 We first show that $J_k(\cdot)$, as defined in Eq.~\eqref{J-finite-horizon}, is non-decreasing for all $k$ using induction. The statement holds for $J_0(x) = c_{\mathrm{term}} x$. Now suppose that $J_k(\cdot)$ is non-decreasing. Let $x_1 \leq x_2$. We have
    \begin{align*}
        J_{k+1}(x_1)-J_{k+1}(x_2) &= \min_{f\in \mathcal{H}_{x_1}}\left[c_{\mathrm{hold}} x_1 + c_{\mathrm{over}} (f)^{+} + \gamma \mathbb{E}[J_k(x_1 + A - f - B)]\right] \\&- \min_{f\in \mathcal{H}_{x_2}}\left[c_{\mathrm{hold}} x_2 + c_{\mathrm{over}} (f)^{+} + \gamma \mathbb{E}[J_k(x_2 + A - f - B)]\right] 
        \\&:= \min_{f\in \mathcal{H}_{x_1}} \left[Q_k(x_1,f)\right] - \min_{f\in \mathcal{H}_{x_2}} \left[Q_k(x_2,f)\right].
    \end{align*}
    Note that since we have already shown that $J_k$ is convex in Lemma~\ref{J-k-convex}, the quantity $\min_{f \in \mathcal{H}_x} Q_k(x,f)$ is well-defined. Also, by the definition of $\mathcal{H}_x$, we know that $\mathcal{H}_{x_1} \subseteq \mathcal{H}_{x_2}$. We consider two cases:
    \begin{itemize}
        \item $\arg\min_{f \in \mathcal{H}_{x_2}} Q_k(x_2,f) \in \mathcal{H}_{x_1}$: Then, for every realization of $A$, and any $f \in \mathcal{H}_{x_1}$, we have
        \[x_1 + A - f - B \leq x_2 + A - f - B,\]
        and since $J_k(\cdot)$ is non-decreasing,
        \[J_k(x_1 + A - f - B) \leq J_k(x_2 + A - f - B).\]
        Taking expectation preserves the inequality, hence
        \[\mathbb{E}[J_k(x_1 + A - f - B)] \leq \mathbb{E}[J_k(x_2 + A - f - B)].\]
        And since $x_1 \leq x_2$, we have $Q_k(x_1, f)\leq Q_k(x_2,f).$ Taking minimum over $f \in \mathcal{H}_{x_1}$ we obtain
        \[
        J_{k+1}(x_1) = \min_{f \in \mathcal{H}_{x_1}} Q_k(x_1,f)\leq \min_{f \in \mathcal{H}_{x_1}} Q_k(x_2,f) = \min_{f \in \mathcal{H}_{x_2}} Q_k(x_2,f)=J_{k+1}(x_2).
        \]
        \item $\arg\min_{f\in\mathcal{H}_{x_2}}Q_k(x_2,f)\in\mathcal{H}_{x_2}-\mathcal{H}_{x_1}$: In this case, let us define $f_2^*=\arg\min_{f\in\mathcal{H}_{x_2}}Q_k(x_2,f)$ and $f_1^* = \arg\min_{f\in\mathcal{H}_{x_1}}Q_k(x_1,f)$. According to the definition of $\mathcal{H}_x$, it is not hard to see that \[f_1^*\leq \min\{x_1,2B\}-B<f_2^*\leq \min\{x_2,2B\}-B.\]
        Instead of directly showing $Q_k(x_1, f_1^*)\leq Q_k(x_2, f_2^*)$, we will show that $Q_k(x_1, f_1^*)\leq Q_k(x_1, \hat{f})$, and $Q_k(x_1, \hat{f})\leq Q_k(x_2, f_2^*)$, where $\hat{f}=\min\{x_1, 2B\}-B\in\mathcal{H}_{x_1}$. Since $f_1^*$ is a minimizer over $\mathcal{H}_{x_1}$ and
        $\hat f\in \mathcal{H}_{x_1}$, we have $Q_k(x_1,f_1^*) \leq Q_k(x_1,\hat f)$. Therefore, it remains to show that
        \begin{equation}
        \label{Q-hat-two-star}
            Q_k(x_1,\hat f)\leq Q_k(x_2,f_2^*).
        \end{equation}
            
        To show inequality \eqref{Q-hat-two-star}, we can write
        \begin{align*}
            Q_k(x_1,\hat{f})
            &= c_{\mathrm{hold}} x_1
            + c_{\mathrm{over}} (\hat{f})^{+}
            + \gamma \mathbb{E}[J_k(x_1 + a - \hat{f} - B)].
        \end{align*}
        Since $x_1\leq x_2$, we have $c_{\mathrm{hold}}x_1\leq c_{\mathrm{hold}}x_2.$
        Also, by construction:
        \[
            \hat{f} = \min\{x_1,2B\}-B < f_2^*,
        \]
        and since $(\cdot)^+$ is non-decreasing, $c_{\mathrm{over}}(\hat{f})^+
            < c_{\mathrm{over}}(f_2^*)^+.$
        Finally, we compare the $J_k(\cdot)$ terms that appear on both sides of \eqref{Q-hat-two-star}. We have
        \[
            f_2^*\leq \min\{x_2,2B\}-B\leq x_2-B.
        \]
        Hence, for any realization of $A$, we have $A \leq x_2+A-f_2^*-B$. On the other hand, since $\hat{f}=\min\{x_1,2B\}-B$, we can write the following relations
        \begin{align*}
            &x_1+A-\hat{f}-B=A+x_1-\min\{x_1,2B\}\cr
            &x_2+A-f_2^*-B \geq A+x_2-\min\{x_2,2B\}.
        \end{align*}
        Now note that the function $x-\min\{x,2B\}$ is non-decreasing in $x$. Since $x_1\leq x_2$, we have
        \[
            x_1-\min\{x_1,2B\}
            \leq
            x_2-\min\{x_2,2B\}.
        \]
        Hence, $x_1+A-\hat f-B
            \leq
            x_2+A-f_2^*-B$. Finally since $J_k(\cdot)$ is non-decreasing by the induction hypothesis, we have
        \[
        J_k(x_1+A-\hat f-B) \leq J_k(x_2+A-f_2^*-B). 
        \]
        Since taking expectation preserves the inequality, we can conclude that Eq.~\eqref{Q-hat-two-star} holds and
        \begin{align*}
            J_{k+1}(x_1)-J_{k+1}(x_2) &= Q_k(x_1, f_1^*) - Q_k(x_2, f_2^*) \leq 0.
        \end{align*}
    \end{itemize}

    Hence, in both cases, $J_{k+1}(x_1) \leq J_{k+1}(x_2)$, which completes the induction. Finally, since we have shown that $J_k(\cdot)$ is non-decreasing for all $k$, taking the limit as $k \to \infty$, we conclude that $J^*(\cdot)$ is also a non-decreasing function.

\end{document}